\newtheorem{theorem}{Theorem}
\begin{document}
%
\title{Real-time Control of Battery Energy Storage Systems to Provide Ancillary Services Considering Voltage-Dependent Capability of DC-AC Converters}

\author{%
\IEEEauthorblockN{Zhao Yuan,~\IEEEmembership{Member,~IEEE}, Antonio Zecchino,~\IEEEmembership{Member,~IEEE},  Rachid Cherkaoui,~\IEEEmembership{Senior Member,~IEEE}, \\Mario Paolone~\IEEEmembership{Senior Member,~IEEE}}

\thanks{The authors are with the Distributed Electrical Systems Laboratory, École Polytechnique Fédérale de Lausanne, Switzerland (EPFL), e-mail: yuanzhao@kth.se, \{zhao.yuan, antonio.zecchino, rachid.cherkaoui, mario.paolone\}@epfl.ch.}
\thanks{This work has received funding from the European Union’s Horizon 2020 research and innovation programme under grant agreement n°773406.}
}

\markboth{IEEE Trans. On Smart Grid.}{}
\maketitle
\begin{abstract}
Frequency response and voltage support are vital ancillary services for power grids. In this paper, we design and experimentally validate a real-time control framework for battery energy storage systems (BESSs) to provide ancillary services to power grids. The objective of the control system is to utilize the full capability of the BESSs to provide ancillary services. We take the voltage-dependent capability curve of the DC-AC converter and the security requirements of BESSs as constraints of the control system. The initial power set-points are obtained based on the droop control approach. To guarantee the feasibility of the power set-points with respect to both the converter capability and BESS security constraints, the final power set-points calculation is formulated as a nonconvex optimization problem. A convex {and computationally efficient} reformulation of the original control problem is then proposed. We prove that the proposed convex optimization gives the global optimal solution to the original nonconvex problem. We improve the computational performance of this algorithm by discretizing the feasible region of the optimization model. We achieve a 100 ms update time of the controller setpoint computation in the experimental validation of the utility-scale 720 kVA / 560 kWh BESS on the EPFL campus.
\end{abstract}
\begin{IEEEkeywords}
Battery Energy Storage Systems, Real-time Control, Ancillary Services, Optimization, Discretization.
\end{IEEEkeywords}
\section*{Nomenclature}
\setlength{\parindent}{0.5em}
\textbf{Index:}\\
\begin{tabular}{l l}
$t$ & Time step. 
\end{tabular}

\textbf{Variables:} \\
\begin{tabular}{l l}
$P^{AC}_{t},Q^{AC}_{t}$ & Active and reactive power set-points \\
&on the AC-bus of the BESS.\\
$P^{DC}_t$ & Active power on the DC-bus of BESSs. \\
$v^{DC}_t,i^{DC}_t$ & Voltage and current on the BESS DC-bus. \\
$SoC_{t}$ & Battery {state-of-charge.}\\
$obj,obj^{M},obj^{S}$ & Objective function of the original,\\
&modified and static optimization problem.\\
\end{tabular}

\textbf{Parameters:} \\
\begin{tabular}{l l}
$\Delta t, \theta^{AC}_t$ &Time interval and grid voltage phase angle.\\
$P^{AC}_{0,t},Q^{AC}_{0,t}$ &Initial active and reactive power set-points\\
&on the AC-bus of the BESS.\\
$f_{t},v^{AC}_{t}, i^{AC}_{t}$ &Grid frequency, voltage and current\\
&measurements.\\
\end{tabular}

\begin{tabular}{l l}
$f^{nom}$ & Nominal grid frequency.\\
$v^{ACnom}$ & Nominal grid voltage.\\
$\Delta f_{t},\Delta v^{AC}_{t}$ & Deviations of frequency and voltage with\\
&respect to the nominal values.\\
$\alpha,\beta$ & Frequency and voltage droop coefficients.\\
$S^{ACmax}$ & Capacity of BESSs DC-AC converter.\\
$SoC^{min}$ & Lower bound of $SoC$.\\
$SoC^{max}$ & Upper bound of $SoC$.\\
$v^{DCmin}$ & Lower bound of $v^{DC}_t$.\\
$v^{DCmax}$ & Upper bound of $v^{DC}_t$.\\
$v^{max}_s$ & Upper bounds of $v_s$.\\
$i^{DCmax}$ & Upper bound of $i^{DC}_t$.\\
$C^{max}$ &Charge capacity of the battery.\\
$\eta$ & Energy efficiency of the DC-AC converter.\\
$v_{C1},v_{C2}$ &Internal voltages of the battery equivalent\\
&circuit model.\\
$v_{C3},v_s$ &Internal voltages of the battery equivalent\\
&circuit model.\\
$E$ &Open-circuit voltage of the battery equivalent\\
&circuit model.\\
$R_1,R_2$ &Internal resistances of the battery equivalent\\
&circuit model.\\
$R_3,R_s$ &Internal resistances of the battery equivalent\\
& circuit model.\\
$C_1,C_2,C_3$ &Internal capacitances of the battery equivalent\\
&circuit model.\\
$a, b$ &Parameters to estimate $E$.\\
$\xi$ &Penalty parameter.\\
\end{tabular}

{\textbf{Metrics:}}\\
\begin{tabular}{l l}
TDE &Total discharged energy. \\
TCE &Total charged energy. \\
TSE &Total sustained energy. 
\end{tabular}
\section{Introduction}
Power system dynamics in traditional grids has been largely dominated by the responses of synchronous generators \cite{kundur1994power}. With the ever increasing connection of power electronics based generation, new schemes of frequency and voltage controls are in demand \cite{mario2020pscc,Surv_Fre_volt1,Surv_Fre_volt2}. The analysis of Australia blackouts during September 2016 and August 2018 show very large rate-of-change-of-frequency (RoCoF) calling for the provision of fast frequency control services \cite{AEM2017blackout,AEM2019blackout}. In this regard, battery energy storage systems (BESSs) operating in grid-following or grid-forming mode can provide various benefits to enhance the stability of power systems \cite{mario2020pscc,GridformInveACri,RevGriConConvShor,RolPowEleFutLow,Opt_Bat_Ene}. The economic analysis {in} \cite{Opt_LiFe_UK,BOZORG2018270}, based on the estimated battery lifetime and UK frequency regulation market, shows that deploying BESS is profitable in the UK market. The lowest tender price  for BESS firm frequency control provider is 17.4 \pounds /MW/h which is lower than the tender prices for many traditional frequency control providers.  
Recent advances in the control of BESSs to provide grid ancillary services can be categorized into four areas: (1) simultaneous provision of multiple services such as energy, frequency control and voltage support \cite{Cont_Bat_Simu_Mult,Achi_Disp_Feed,zecchino2019optimal}, (2) optimal operation considering various BESS constraints and the solution algorithms \cite{Flex_Opt_Oper,Resi_Ener_Stor_Bid,Hyb_Con_Net_Bat,zecchino2019optimal}, (3) converter control techniques with improved performance for weak or low-inertia power grid \cite{GridformInveACri,RevGriConConvShor,RolPowEleFutLow}. (4) integrated operation with renewable energy resources \cite{Cord_Cont_Stra,BESSEnaDisSol,IncorBESSMultMWPV}. {{In}} the following, we specifically focus on the literature in the area (2) of BESSs control as it is the most relevant with respect to the technical content of this paper.
Authors in \cite{Cont_Bat_Simu_Mult} propose a control framework for BESSs to provide both energy and frequency control services. In this framework, an energy output schedule of the BESSs is firstly optimized and, then, the frequency control is superimposed. Based on a model predictive control (MPC), {reference} \cite{Achi_Disp_Feed} achieves dispatchability of the distribution feeder in a 5 min time resolution considering the BESS operation constraints. 
In \cite{Flex_Opt_Oper}, the daily operation of BESS is formulated {as} a mixed integer nonlinear programming problem (MINLP) to optimize the time duration of charge/discharge. The formulated MINLP is solved with a two-stage approach in which the integer variables are solved in the first-stage and a nonlinear programming problem (NLP) is solved in the second-stage \cite{Flex_Opt_Oper}. Reference \cite{Resi_Ener_Stor_Bid} proposes an energy storage management based on a stochastic optimization problem to minimize the net system cost considering bidirectional energy flow with the grid. The stochastic optimization problem is reformulated and solved through the Lyapunov optimization approach which can deal with the uncertainties from renewable energy generation, power load and electricity price \cite{Resi_Ener_Stor_Bid}. The authors of \cite{Resi_Ener_Stor_Bid} prove the existence of the upper bound of the {so called} Lyapunov drift and use this upper bound to design the control algorithm to minimize a proposed drift-plus-cost metric. 

{The aforementioned references are focusing on BESS optimal control on relatively long time horizons $>$ 1 s. The consideration of very short time horizon, especially for control response around 100 ms, is not well discussed in the exisiting literature.} To control the battery state-of-charge (SoC), {reference} \cite{Hyb_Con_Net_Bat} proposes a hybrid control methodology and proves the convergence and optimality of the proposed algorithm.
Integrated operation of BESS with wind power plant can ensure the commitment of wind power plant to provide primary and secondary frequency control services \cite{Cord_Cont_Stra}. A state-machine based controller in \cite{Cord_Cont_Stra} considers the operational constraints of wind power plan and the BESS. The battery {SoC} is kept at the preferred level by the adaptive feedback control \cite{Cord_Cont_Stra}. {In all the above listed references}, the {voltage-dependent} capability curve of {the DC-AC} converter is not considered. 
The capability of the DC-AC converter is determined by the physical constraints (current or thermal limits of switching devices e.g. IGBTs and connectors) and operational parameters (voltages on the DC-bus and AC side). Due to the variation of these operational parameters, the actual capability of the DC-AC converter is voltage-dependent. Various benefits of considering the actual full capability of power generation resources are shown in \cite{Ext_Reac_Cap_DFIG,Exa_Thr_ACOPF_Cap,Effe_Reac_Pow_Max,johansson2004power}. Capability curve of doubly fed induction generator (DFIG) wind turbine is considered in \cite{Ext_Reac_Cap_DFIG} {for generation dispatch and voltage control. The simulation results of the 26-bus power network show a significant reduction of power losses and improved post-fault voltage profiles} due to the full commitment of DFIG capability.\cite{Ext_Reac_Cap_DFIG} \cite{Effe_Reac_Pow_Max} approximates the generator capability curve by quadratic functions considering the terminal voltage, field and armature current limits. Using the generator capability curve in the optimal power flow (OPF) model,  the results for the CIGRE 32-bus and 1211-bus power networks show that representing generator capability is important when stressed conditions are studied \cite{Effe_Reac_Pow_Max}. As a matter of fact, there is a research gap in considering the voltage-dependent capability of DC-AC converters in the real-time control of BESSs.

In summary, two research gaps are identified from the exisiting literature about the real-time control of BESSs: (1) how to formulate and solve a suitable optimal control algorithm in order to update the BESS control loop to provide ancillary services within a very low latency (e.g. 100 ms). and  (2) how to embed the voltage-dependent capability of the DC-AC converter, which evolves according to the DC-bus voltage and AC-side voltage. To tackle these research gaps, the contributions of this paper are: 
\begin{itemize}
\item formulation of the BESS real-time control as a nonlinear optimization problem and its convexification considering the voltage-dependent capability curve of the DC-AC converter;
\item analytical proof of the solution optimality of the convexified optimization problem with respect to the original one;
\item developement of two algorithms to efficiently solve the optimization problem and achieving computation times compatible with the BESS realtime control.
\end{itemize}
There are several benefits of using the proposed algorithms to operate the BESS:
\begin{itemize}
\item economic efficiency for the BESS operator to avoid expensive global optimization solvers to find the global optimal solution of the initial nonconvex optimization problem;
\item provision of fast and optimal BESS frequency regulation and voltage support to the power grid;
\item continuity of the BESS operation for ancillary grid service delivery.
\end{itemize}
The rest of this paper is organized as follows. Section II formulates the optimization models and illustrates the solution algorithms for the BESSs real-time control. Section III presents the experiments for frequency control and voltage support. Section IV concludes the paper.
\section{Real-time Control of a BESS}
\begin{figure*}[!htbp]
	\vspace{-0.4cm}
	\centering
	\includegraphics[width=\linewidth]{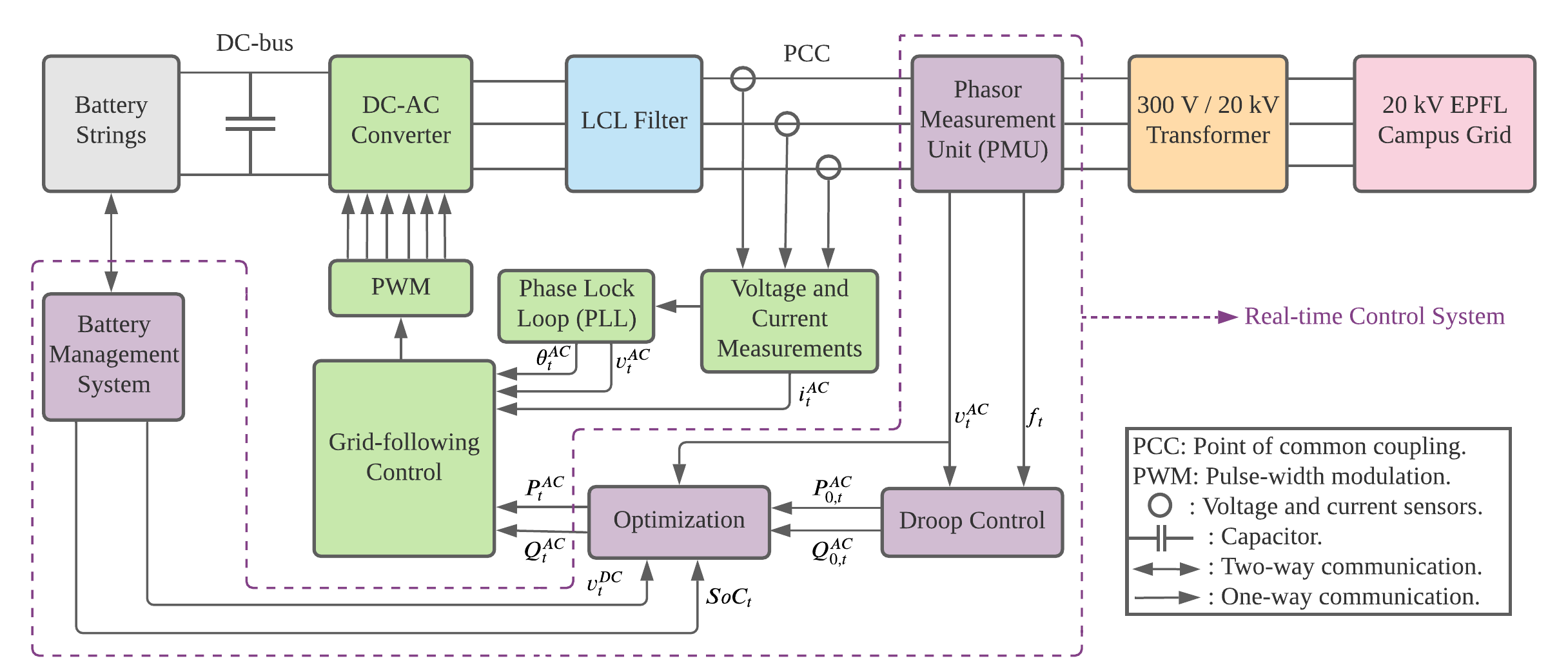}
	\caption{Block diagram of the BESS control system.}
	\label{fig:control_block}
\end{figure*}
As known, there are two main approaches to control BESS power converters: the \textit{grid-following} and the \textit{grid-forming} controls \cite{pfbc_micisl,conpowcon_acmic}. A grid-following unit controls the converter in order to inject active and reactive powers by modifying the amplitude and angle (with respect to the grid voltage phasor) of the converter reference current. To do so, it requires the knowledge of the fundamental phasor of the grid voltage at a point of common coupling (PCC). On the contrary, a grid-forming unit controls the frequency and voltage at the PCC in order to imitate a voltage source behind an impedance. To do so, this control does not require the knowledge of the fundamental frequency phasor of the grid voltage at the PCC.
A grid-following unit can also provide \textit{frequency and voltage supports} by adding further high-level control loops to generate the reference active and reactive powers  \cite{conpowcon_acmic}. In this paper we consider this former control since, in actual power system, the majority of converter-interfaced BESSs are controlled as grid-following ones. We show in Fig. \ref{fig:control_block} the structure of the BESS control system. The proposed real-time control system works as the high-level control layer to provide the optimal power set-point for the grid-following control of the DC-AC converter. The real-time control system is denoted in purple color in Fig. \ref{fig:control_block}. It consists of the battery management system, phasor measurement unit (PMU), droop control and optimization. The battery management system is responsible for monitoring and controlling of the battery cells. The measured DC-bus voltage and battery state of charge (SoC) are communicated to the optimization module which solves the optimization problem we describe in the following to provide the optimal power set-point to the grid-following control of the DC-AC converter. The initial power set-point is calculated by the droop control. The PMU is used to accurately estimate the grid voltage and frequency.
\subsection{Optimization Model}
Our DC-AC converter is working in a grid-following control with the supporting mode where we can specify the power set-point of the converter. The power set-points are given by an outer control loop i.e. the real-time controller that we proposed in this paper.
Traditionally the frequency control and voltage support to the power grid are provided by droop control formulated as \eqref{eq:droop_p}-\eqref{eq:droop_q} \cite{kundur1994power}.
\begin{subequations}
\begin{align}
    P^{AC}_{0,t}&= \alpha \Delta f_t = \alpha (f_t-f^{nom}),\, \forall \left | \Delta f_t\right |>\Delta^{min} f_t\label{eq:droop_p} \\
    Q^{AC}_{0,t}&= \beta \Delta v^{AC}_t = \beta (v^{AC}_t-v^{ACnom}),\, \forall \left | \Delta v^{AC}_t\right |>\Delta^{min} v^{AC}_t\label{eq:droop_q}
\end{align}
Where $(P^{AC}_{0,t}, Q^{AC}_{0,t})$ are the initial active and reactive power set-points of the BESS at time-step $t\in T$ being the operation period. 
$\Delta f_t=f_t-f^{nom}$ is the power grid frequency deviation from the nominal frequency $f^{nom}$. 
$\Delta v^{AC}_t=v^{AC}_t-v^{ACnom}$ is the power grid direct sequence AC voltage magnitude deviation from the nominal voltage $v^{ACnom}$. The parameters $\alpha_0,\beta_0$ are the initial droop coefficients. 
$\Delta^{min} f_t, \Delta^{min}v^{AC}_t$ are the dead-bands of frequency deviation and voltage deviation {of} the droop control. 
Note {that} the dead-bands are necessary mainly for two reasons: (1) they are required by grid codes (e.g. Germany \cite{entsoefsm}), (2) to avoid {unnecessary control actions}  of the controller.
The droop coefficients are set by $\alpha =\frac{P^{max}}{\Delta^{max} f_t}, \beta= \frac{Q^{max}}{\Delta^{max} v^{AC}_t}$ to maximize the frequency control and voltage support performance.
The maximum active power and reactive power parameters $P^{max}, Q^{max}$ are specified by the BESS. Historical measurements can be used to find the maximum frequency deviation $\Delta^{max} f$ and maximum voltage deviation $\Delta^{max} v^{AC}$ parameters. These parameters can be found, for instance, in \cite{zecchino2019optimal}. 
\begin{figure}[!htbp]
	\vspace{-0.5cm}
		\centering
		\hspace*{-0.45cm}
    	\includegraphics[width=0.8\linewidth, angle =90]{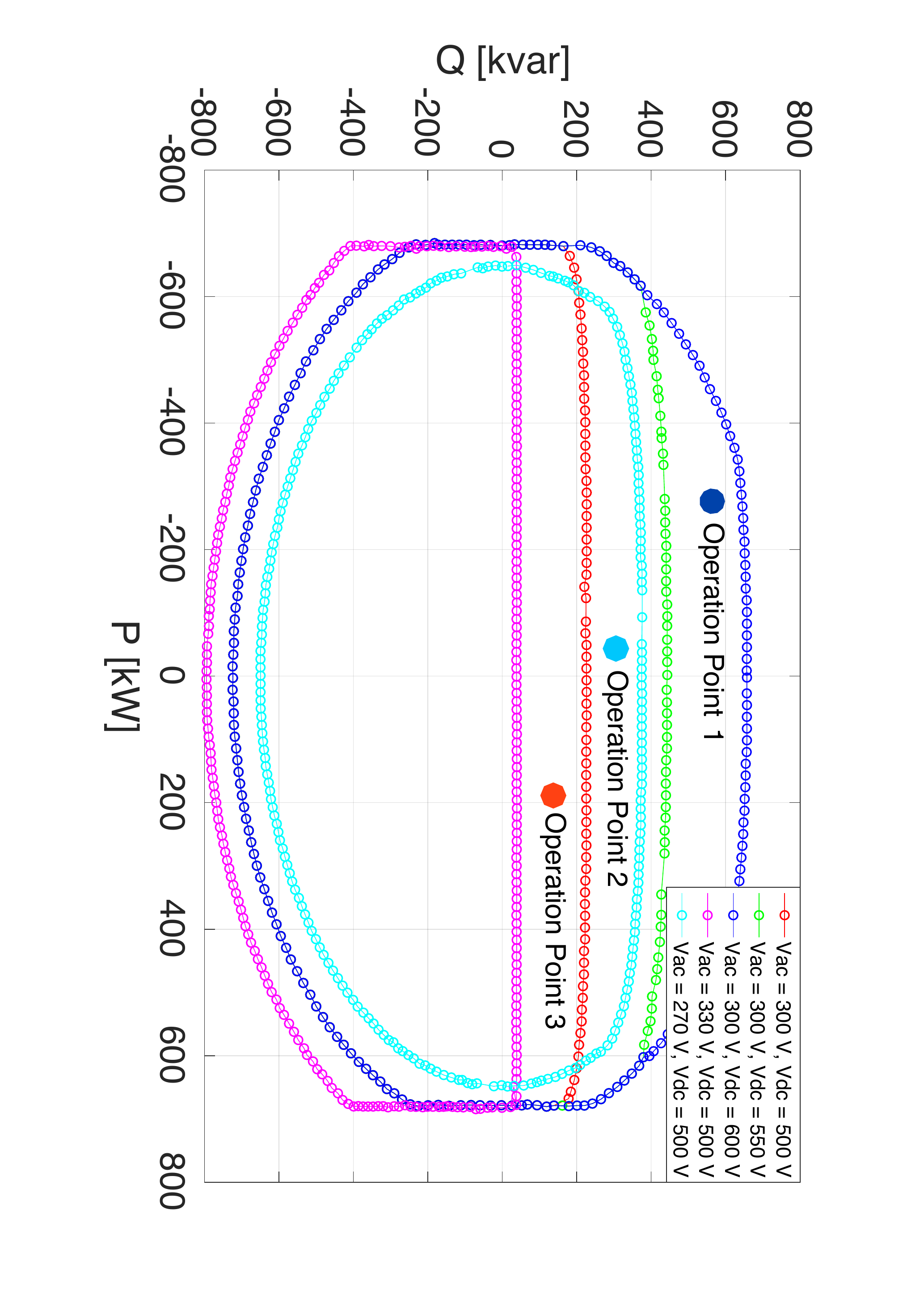}
    	\vspace{-1.4cm}
		\caption{An example of voltage-dependent converter capability curves as a function of AC-side and DC-side voltages \cite{zecchino2019optimal}.}
		\label{fig:Converter_CapabilityCurves}
\end{figure}

We consider the converter capability curve shown in Fig. \ref{fig:Converter_CapabilityCurves}, together with the battery cells' security requirements as a series of constraints in our real-time control as in \eqref{eq:bound_pq}.  The x-axis in Fig. \ref{fig:Converter_CapabilityCurves} is the active power (P). The y-axis in Fig. \ref{fig:Converter_CapabilityCurves} is the reactive power (Q). The capability curve is determined jointly by the AC-side voltage and DC-side voltage (see the legend) of the DC-AC converter. We use different colors to denote the difference capability curves for difference DC-side and AC-side voltages. The feasible operation points of the DC-AC converter must be always inside the capability curve. We put three operation points inside the capability curves as examples in Fig. \ref{fig:Converter_CapabilityCurves}. The operation points with different colors in Fig. \ref{fig:Converter_CapabilityCurves} are feasible with respect to the corresponding capability curves to avoid the converter misoperation. The capability curve of DC-AC converters is {usually} derived by considering the apparent power capability, the DC-bus voltage limit and the AC-side voltage limit. The capability curve is usually given by the converter's manufacturer and provided to the BESS user as a technical specification. For the expression and parameter of \eqref{eq:bound_pq}, please refer to reference \cite{zecchino2019optimal}.
\begin{align}
    h(P^{AC}_t, Q^{AC}_t, v^{DC}_t, v^{AC}_t, SoC_t) \leq 0 \label{eq:bound_pq}
\end{align}
Where $(P^{AC}_t, Q^{AC}_t)$ is the active and reactive power set-point of the BESS, $v^{DC}_t, v^{AC}_t$ are the voltage magnitudes {on} the DC bus and AC side. These parameters are measured by dedicated voltage sensors available in the BESS asset. $SoC_t$ is the state-of-charge of the battery cells estimated by the battery management system. Instead of using the approximated capability curve expressed as $(P^{AC}_t)^2+(Q^{AC}_t)^2 \leq (S^{ACmax})^2$ (where $S^{ACmax}$ is the maximum apparent power of the converter), we fit the actual converter capability curves $h$ by using measurements from the converter {management system} and then scale the capability curves proportionally according to the actual useful capacity of the BESS. The fitted converter capability curves are represented by a series of linear and quadratic equations. We use (1c) instead of the approximated capability curve because (I) (1c) is more accurate to capture the feasible operating points of the BESS converter; (II) the capability curve is actually varying according to different converter DC-side voltage and AC-side voltage. In other words, when the DC-side voltage or AC-side voltage changes, the corresponding capability curve will also change. Our BESS controller is capable of selecting the correct capability curve in real-time operations. We show in Fig. \ref{fig:compare} the differences between the approximated circle-shape converter capability curve (in blue color) and the fitted actual capability curve. As it can be seen, there are several regions (arrowed in Fig. \ref{fig:compare}) covered by the approximated capability curve that are actually infeasible for the actual capability curve.

The battery state of charge $SoC_{t}$ is updated according to equation \eqref{eq:socupd}.
\begin{align}
SoC_{t+1}&=SoC_{t}+ \frac{\int_{t}^{t+1}i^{DC}_{t} dt}{C^{max}}\approx SoC_{t}+ \frac{P^{DC}_{t}}{v^{DC}_{t} C^{max}} \Delta t \label{eq:socupd} 
\end{align}
Where $C^{max}$ is the maximum storage capacity of the battery in Ampere-hour i.e. [A·h]. $C^{max}$ can be updated at the beginning of each control loop to estimate the actual storage capacity according to the method {proposed} in \cite{mcont_micogrid}. The data-set of $C^{max}$ at different charging/discharging {rates} are usually available from the battery manufacturer \cite{LecCell}. A look-up table of $C^{max}$ at any possible level of charging/discharging current is obtained by linearly interpolating this data-set {with respect to} the current value of the previous control loop. $i^{DC}_t\approx \frac{P^{DC}_t}{v^{DC}_{t}}$ is the charging/discharging DC current. $\Delta t = 50\text{ ms}$ is the time resolution of each control loop. Based on the initial battery state of charge measurement $SoC_{0}$, $SoC_{t}$ can be calculated during the BESS operation. The battery output powers are jointly constrained by the DC-AC converter capability curves and the state-of-charge constraint. These constraints avoid the over charging/discharging of the battery.
\begin{figure}[!htbp]
	\vspace{-0.4cm}
	\centering
	\includegraphics[width=\linewidth]{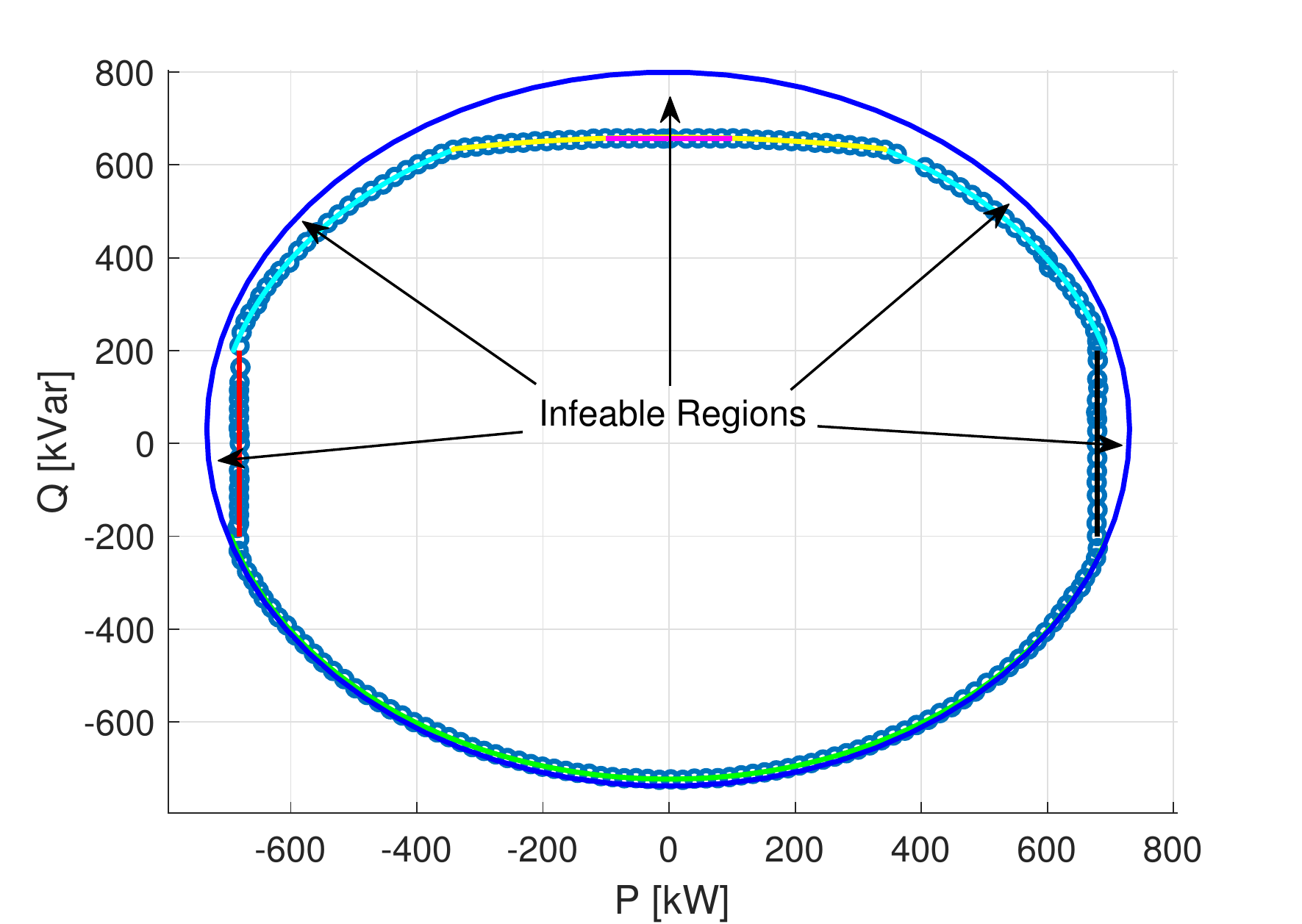}
	\vspace{-0.6cm}
	\caption{{Comparison of the approximated and actual converter capability curves.}}
	\label{fig:compare}
\end{figure}
The relationship of the active power at the DC bus $P^{DC}_t$ and the active power at the AC side of the converter $P^{AC}_t$ is expressed as:
\begin{align}\label{eq:pdcac}
P^{DC}_t=\left\{\begin{matrix}
\eta P^{AC}_t ,\; \forall P^{AC}_t<0\\ 
\frac{P^{AC}_t}{\eta},\; \forall P^{AC}_t\geq0 
\end{matrix}\right. 
\end{align}
Where $\eta$ is the charging/discharging efficiency of {the} converter. $P^{AC}_t<0$ means charging of the BESS and $P^{AC}_t\geq 0$ means discharging. 
We estimate the battery {state} by using a three time constant model (TTC) shown in Fig. \ref{fig:ttc_model}.
\begin{figure}
	\vspace{-0.2cm}
    \centering
    \includegraphics[width=\linewidth]{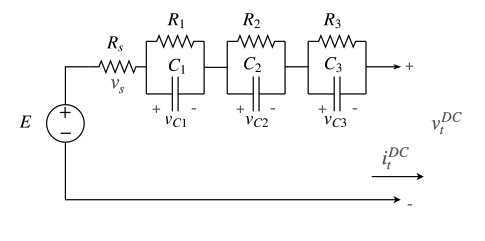}
    \vspace{-1.1cm}
    \caption{Battery TTC model \cite{zecchino2019optimal}. }
    \label{fig:ttc_model}
\end{figure}
\begin{align}
    &C_{1(2,3)}\frac{\mathrm{d}v_{C1(2,3)} }{\mathrm{d} t}+\frac{v_{C1(2,3)}}{R_{1(2,3)}}=\frac{v_s}{R_s} \label{eq:ttcvc1}\\
    &v_s+v_{C1}+v_{C2}+v_{C3}=E-v^{DC}_{t} \label{eq:ttcsum}
\end{align}
Where $\bold{v_c}=[v_{C1};\, v_{C2};\, v_{C3}]$ are the state variables i.e. voltages that are updated by solving \eqref{eq:ttcvc1}-\eqref{eq:ttcsum} in each control loop. To accelerate the computational efficiency, we use a discrete model to update $\bold{v_c}$.
The TTC model capacitance parameters $C_1, C_2, C_3$ and resistance parameters $R_s, R_1, R_2, R_3$ are identified by generating random length time duration power set-points and, then by measuring the corresponding battery voltage and current. 
Details about this process and the identified parameters can be found in \cite{zecchino2019optimal}. 
The voltage variable $E$ represents the open circuit voltage of the battery, which depends on the SoC as 
$E=a + b\times SoC_t$. The parameters $a$ and $b$ are identified within the TTC model identification process or provided by the BESS manufacturer. An example of the identified parameters can be found in \cite{zecchino2019optimal}. 

After updating $\bold{v_c}=[v_{C1};\, v_{C2};\, v_{C3}]$, {substituting $v_s=\frac{P^{DC}_t}{v^{DC}_t} R_s$ in equation \eqref{eq:ttcsum} and multiplying both sides {by} $v^{DC}_t$}, equation \eqref{eq:ttcsum} is {rewritten} {{as}}:
\begin{align}
    (v^{DC}_t)^2+(\bold{1}^T\bold{v_c}-E)v^{DC}_t+P^{DC}_t R_s=0 \label{eq:vdcpdc}
\end{align}
Where $\bold{1}^T=[1, 1, 1]$. {$\bold{1}^T\bold{v_c}=v_{C1}+v_{C2}+v_{C3}$}. Solving constraints \eqref{eq:bound_pq} jointly with equation \eqref{eq:vdcpdc} gives the feasible power set-points satisfying the evolving capability curves during the control loop. The operational requirement of the DC-bus voltage $v^{DC}_{t}$ for the converter is:
\begin{align}
    v^{DCmin} \leq v^{DC}_t \leq v^{DCmax} \label{eq:bound_vdc}
\end{align}

The security range $(SoC^{min}, SoC^{max})$ of $SoC_t$ should be always kept during all the operational periods:
\begin{align}
    SoC^{min}\leq SoC_t \leq SoC^{max} \label{eq:bound_soe}
\end{align}

The optimal active and reactive power set-points are obtained by solving the following optimization problem: 
\begin{align}
    &\underset{\Omega}{\text{Minimize}}\; obj=( P^{AC}_{t}- P^{AC}_{0,t})^2+( Q^{AC}_{t}- Q^{AC}_{0,t})^2 \label{eq:opt_pq}\\
    &\text{subject to}\;\; \eqref{eq:bound_pq}-\eqref{eq:pdcac},\eqref{eq:vdcpdc}-\eqref{eq:bound_soe}\nonumber
\end{align}
This optimization model is necessary because the initial power set-points $(P^{AC}_{0,t}, Q^{AC}_{0,t})$ may be outside the feasible operational region of the BESS. We show in the experimental results of this paper how the optimization model works when this happens. We denote this original optimization problem as PQ-opt-o. Where $\Omega=\left \{P^{AC}_{t}, Q^{AC}_{t}, v^{DC}_{t}, P^{DC}_{t}\right \}$ is the set of variables. $\Phi=\left \{P^{AC}_{0,t}, Q^{AC}_{0,t}, v^{AC}_{t}, \eta, SoC_{t}\right \}$ is the set of parameters from calculations or measurements. So, the optimal power set-point $(P^{AC}_{t}, Q^{AC}_{t})$ is the one set-point most close to the initial power set-points $(P^{AC}_{0,t}, Q^{AC}_{0,t})$ inside the feasible operational region of the BESS defined by \eqref{eq:droop_p}-\eqref{eq:bound_soe}. If the initial power set-point $(P^{AC}_{0,t}, Q^{AC}_{0,t})$ is already inside the feasible operational region of the BESS, the optimal power set-point $(P^{AC}_{t}, Q^{AC}_{t})$ is equal to $(P^{AC}_{0,t}, Q^{AC}_{0,t})$. Otherwise, if the initial power set-point $(P^{AC}_{0,t}, Q^{AC}_{0,t})$ is not feasible for the BESS, the optimal power set-point $(P^{AC}_{t}, Q^{AC}_{t})$ is computed by (1k). 
\begin{figure}[h]
		\centering
		\vspace{-0.4cm}
    	\includegraphics[width=0.8\linewidth]{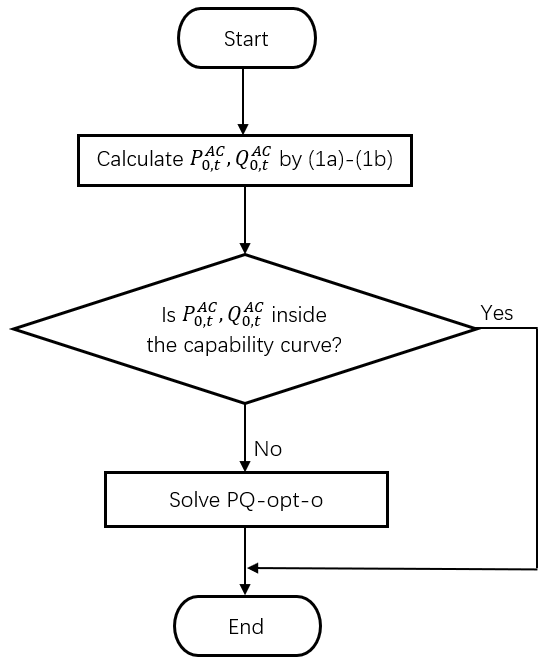}
    	\vspace{-0.3cm}
		\caption{Flow chart of the BESS real-time control.}
		\label{fig:flowchart}
\end{figure}
\end{subequations}
\subsection{Reformulation and Solution Algorithm {for the PQ-opt-o Problem}}
The optimization problem PQ-opt-o is nonconvex due to the nonconvex constraints \eqref{eq:pdcac} and \eqref{eq:vdcpdc}. To improve the solution quality, we can firstly replace constraint \eqref{eq:pdcac} by \eqref{eq:pdcac0}.
\begin{subequations}
\begin{align}\label{eq:pdcac0}
P^{DC}_t=\left\{\begin{matrix}
\eta P^{AC}_t ,\; \forall P^{AC}_{0,t}<0\\ 
\frac{P^{AC}_t}{\eta},\; \forall P^{AC}_{0,t}\geq0 
\end{matrix}\right. 
\end{align}
Constraint \eqref{eq:pdcac0} is equivalent {to} \eqref{eq:pdcac} since $P^{AC}_{0,t}$ always has the same sign with $P^{AC}_{t}$. So, instead of using constraint \eqref{eq:pdcac}, we can use either $P^{DC}_t=\eta P^{AC}_t$ or $P^{DC}_t=\frac{P^{AC}_t}{\eta}$ to formulate the the optimization model based on $P^{AC}_{0,t}$. 

To efficiently find an optimal solution, we then convexify constraint \eqref{eq:vdcpdc} to \cite{SOCACOPF_relax_fea_me,DLMP_me}:
\begin{align}
    (v^{DC}_t)^2+(\bold{1}^T\bold{v_c}-E)v^{DC}_t+P^{DC}_t R_s\leq0 \label{eq:vdcpdc2}
\end{align}
However, the convex relaxation in \eqref{eq:vdcpdc2} can make the final solution infeasible {with respect to} the original constraint \eqref{eq:vdcpdc}. In order to {force} the optimal solution to be feasible, we propose to modify the original objective function to:
\begin{align}
    obj^{M}=( P^{AC}_{t}- P^{AC}_{0,t})^2+( Q^{AC}_{t}- Q^{AC}_{0,t})^2 - \xi v^{DC}_t \label{eq:opt_pq2}
\end{align}
We denote the modified optimization problem as PQ-opt-m:
\begin{align}
    \underset{\Omega}{\operatorname{argmin}}\;obj^M:= \left\{\Omega \in [\eqref{eq:bound_pq}-\eqref{eq:socupd},\eqref{eq:bound_vdc}-\eqref{eq:bound_soe},\eqref{eq:pdcac0}-\eqref{eq:vdcpdc2}]\right\}
\end{align}

{It is worth noting that,} PQ-opt-m is convex. {Furthermore,} we propose the following theorem 1 to explain the equivalence between the modified optimization problem PQ-opt-m and the original optimization problem PQ-opt-o.

\begin{theorem}\label{theom:bess_relax1}
If the original optimization problem PQ-opt-o is feasible and $\xi>0$, the modified optimization problem PQ-opt-m is also feasible.
\end{theorem}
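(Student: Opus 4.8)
The plan is to prove the claim by a direct feasible-set inclusion argument: I would show that every point admissible for PQ-opt-o is also admissible for PQ-opt-m, so that nonemptiness of the former constraint set forces nonemptiness of the latter. Since feasibility of an optimization problem is a property of its constraints alone and is independent of the objective, the modified objective $obj^M$ and the value of $\xi$ will play no role in the argument beyond being a standing hypothesis.

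First I would fix an arbitrary feasible point $\Omega=\{P^{AC}_t, Q^{AC}_t, v^{DC}_t, P^{DC}_t\}$ of PQ-opt-o and verify it against each constraint of PQ-opt-m in turn. The constraints \eqref{eq:bound_pq}, \eqref{eq:socupd}, \eqref{eq:bound_vdc} and \eqref{eq:bound_soe} appear verbatim in both problems, so they are satisfied immediately by $\Omega$.

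Next I would treat the two constraints that actually differ between the problems. For the power-balance relation, I would invoke the equivalence already established in the text: since $P^{AC}_{0,t}$ always carries the same sign as $P^{AC}_t$, the branch of \eqref{eq:pdcac} selected by the sign of $P^{AC}_t$ coincides with the branch of \eqref{eq:pdcac0} selected by the sign of $P^{AC}_{0,t}$, so any $\Omega$ satisfying \eqref{eq:pdcac} also satisfies \eqref{eq:pdcac0}. For the DC-voltage relation, I would observe that \eqref{eq:vdcpdc2} is a pure relaxation of \eqref{eq:vdcpdc}: an expression that equals zero is in particular $\leq 0$, so the equality \eqref{eq:vdcpdc} implies the inequality \eqref{eq:vdcpdc2}.

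Collecting these observations, $\Omega$ satisfies every constraint defining PQ-opt-m, hence the feasible region of PQ-opt-o is contained in that of PQ-opt-m, and feasibility of the former yields feasibility of the latter. I do not expect a genuine obstacle here, as the statement reduces to a set inclusion; the only point requiring care is recognizing that the hypothesis $\xi>0$ does not enter the feasibility argument at all. That hypothesis is instead needed for the companion exactness result, where $\xi>0$ drives the relaxed constraint \eqref{eq:vdcpdc2} to be active at the optimum so that the solution of PQ-opt-m recovers feasibility for the original equality \eqref{eq:vdcpdc}.
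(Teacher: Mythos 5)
Your proof is correct and follows essentially the same route as the paper: a feasible-set inclusion, with the key observation that the equality \eqref{eq:vdcpdc} implies the relaxed inequality \eqref{eq:vdcpdc2}, so any feasible point of PQ-opt-o is feasible for PQ-opt-m. The one substantive difference is that the paper's proof does not stop at feasibility: it also derives the lower bound $obj^{M}\geq -\xi\, v^{DCmax}$ (using $\xi>0$ together with $v^{DC}_t\leq v^{DCmax}$ and the nonnegativity of the quadratic terms), which is the only place the hypothesis $\xi>0$ is used and which ensures the modified problem is bounded below, so its optimum is well defined for the subsequent exactness theorem. Your remark that $\xi>0$ is superfluous for bare feasibility is therefore accurate, and your explicit verification that a point satisfying \eqref{eq:pdcac} also satisfies \eqref{eq:pdcac0} (via the same-sign property of $P^{AC}_{0,t}$ and $P^{AC}_{t}$) is in fact more careful than the paper, which passes over that constraint silently.
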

\begin{proof}
We prove theorem \ref{theom:bess_relax1} by showing that any feasible solution of the original optimization problem PQ-opt-o is always feasible for the modified optimization problem PQ-opt-m. In the meanwhile, we prove there is a lower bound for the objective function of PQ-opt-m. 

Suppose one feasible solution of PQ-opt-o is $\Omega^{0}=\left \{P^{AC0}_{t}, Q^{AC0}_{t}, v^{DC0}_{t}, P^{DC0}_{t}\right \}$. Since $\Omega^{0}$ satisfies all the constraints of PQ-opt-o and constraint \eqref{eq:vdcpdc} is a subset of constraint \eqref{eq:vdcpdc2}, $\Omega^{0}$ is also feasible for PQ-opt-m. 

For the lower bound $f^{Mmin}$  of the objective function of PQ-opt-m, since $\xi>0, v^{DC}_t<v^{DCmax},( P^{AC}_{t}- P^{AC}_{0,t})^2+( Q^{AC}_{t}- Q^{AC}_{0,t})^2\geq0$, we have:
\begin{align}
    obj^{M}\geq 0- \xi v^{DCmax}_t=- \xi v^{DCmax}_t=f^{Mmin}
\end{align}
\end{proof}
\begin{theorem}\label{theom:bess_relax2}
If the original optimization problem PQ-opt-o is feasible, $\xi>0$ and $\frac{\partial g}{\partial v^{DC_t}}>0$, the optimal solution of the modified optimization problem PQ-opt-m is equal to the global optimal solution of the original optimization problem PQ-opt-o.
\end{theorem}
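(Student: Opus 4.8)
The plan is to prove exactness of the relaxation in two steps: first show that the penalty forces the relaxed inequality \eqref{eq:vdcpdc2} to be active at every minimizer of PQ-opt-m, so that such a minimizer is automatically feasible for PQ-opt-o; then transfer optimality between the two problems by comparing objective values. I would begin by recalling the set inclusion already used in the proof of Theorem~\ref{theom:bess_relax1}: since the equality \eqref{eq:vdcpdc} implies the inequality \eqref{eq:vdcpdc2}, and \eqref{eq:pdcac0} coincides with \eqref{eq:pdcac} because $P^{AC}_{0,t}$ and $P^{AC}_t$ share the same sign, every point feasible for PQ-opt-o is feasible for PQ-opt-m. In particular a global optimum $\Omega^{o}$ of PQ-opt-o is an admissible point of PQ-opt-m.

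The core step is tightness. Let $\Omega^{\ast}$ be a minimizer of PQ-opt-m and let $g$ denote the left-hand side of \eqref{eq:vdcpdc2}. I would argue by contradiction that $g(\Omega^{\ast})=0$. Suppose $g(\Omega^{\ast})<0$. Because $obj^{M}$ contains $v^{DC}_t$ only through the term $-\xi v^{DC}_t$ with $\xi>0$, and because $\partial g/\partial v^{DC}_t>0$ means that a small increase of $v^{DC}_t$ (with $P^{AC}_t,Q^{AC}_t,P^{DC}_t$ held fixed) keeps $g\le 0$, I can raise $v^{DC\ast}_t$ by a small $\epsilon>0$, preserving feasibility and strictly lowering $obj^{M}$, contradicting optimality. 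In KKT form, stationarity in $v^{DC}_t$ reads $-\xi+\lambda\,\partial g/\partial v^{DC}_t+\dots=0$ with $\lambda\ge0$ the multiplier of \eqref{eq:vdcpdc2}; when the upward movement of $v^{DC}_t$ is blocked only by \eqref{eq:vdcpdc2} this forces $\lambda=\xi/(\partial g/\partial v^{DC}_t)>0$, so complementary slackness gives $g(\Omega^{\ast})=0$. As $g(\Omega^{\ast})=0$ is exactly \eqref{eq:vdcpdc}, the point $\Omega^{\ast}$ is feasible for PQ-opt-o.

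To transfer optimality I would chain two inequalities. Feasibility of $\Omega^{\ast}$ for PQ-opt-o gives $obj(\Omega^{\ast})\ge obj(\Omega^{o})$; optimality of $\Omega^{\ast}$ for PQ-opt-m together with admissibility of $\Omega^{o}$ gives $obj^{M}(\Omega^{\ast})\le obj^{M}(\Omega^{o})$. Since $g=0$ holds at both points, $obj^{M}=obj-\xi v^{DC}_t$ there, and the two inequalities combine into $0\le obj(\Omega^{\ast})-obj(\Omega^{o})\le \xi\,(v^{DC\ast}_t-v^{DCo}_t)$, from which I would conclude that $\Omega^{\ast}$ attains the PQ-opt-o optimum.

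The hard part is closing exactly this last chain. The modification adds the penalty $-\xi v^{DC}_t$, and on the active surface $g=0$ the voltage $v^{DC}_t$ is tied to $P^{DC}_t$, hence to $P^{AC}_t$, through the larger root of the quadratic \eqref{eq:vdcpdc}; a priori, minimizing $obj^{M}$ therefore need not minimize $obj$, and the bound above only yields $v^{DC\ast}_t\ge v^{DCo}_t$ rather than equality. I would try to close the gap by using $\partial g/\partial v^{DC}_t>0$ to argue that the relaxation activates the same physical root for every admissible $P^{DC}_t$ and so introduces no $(P^{AC}_t,Q^{AC}_t)$ outside the PQ-opt-o feasible region, and then by controlling $\xi$ so that the $-\xi v^{DC}_t$ term cannot reorder the minimizer of the quadratic $obj$; quantifying the admissible range of $\xi$ in terms of $R_s$ and the curvature of $obj$, or passing to the limit $\xi\to 0^{+}$, is the delicate point I expect to require the most care.
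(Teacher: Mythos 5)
Your first two steps coincide with the paper's own proof: feasible points of PQ-opt-o remain feasible for PQ-opt-m because the equality \eqref{eq:vdcpdc} implies the inequality \eqref{eq:vdcpdc2}, and at any minimizer $\Omega^{*}$ of PQ-opt-m the relaxed constraint must be active, since otherwise $\frac{\partial g}{\partial v^{DC}_t}>0$ together with $\xi>0$ lets you increase $v^{DC}_t$ and strictly decrease $obj^{M}$ (the paper jumps directly to the larger root $v^{DC**}_t$ where $g=0$, you move by a small $\epsilon$; both versions work, and both share the minor omission of checking the cap $v^{DC}_t\leq v^{DCmax}$). So tightness, and hence feasibility of $\Omega^{*}$ for PQ-opt-o, is established the same way in both arguments.

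The genuine gap is your third step, and you concede it yourself. The two-sided bound $0\leq obj(\Omega^{*})-obj(\Omega^{o})\leq \xi\,(v^{DC*}_t-v^{DCo}_t)$ cannot close the argument, and neither of your proposed repairs can: restricting $\xi$ to be small, or passing to the limit $\xi\to 0^{+}$, would prove a weaker statement than the theorem, which asserts exactness for \emph{every} $\xi>0$. The idea you are missing is that $obj^{M}$ is separable: the quadratic tracking term depends only on $(P^{AC}_t,Q^{AC}_t)$ while the penalty depends only on $v^{DC}_t$. The paper exploits this with an exchange argument. Supposing the PQ-opt-o optimum $\Omega'$ satisfies $obj(\Omega')<obj(\Omega^{*})$, construct the hybrid point $\Omega''$ with $(P^{AC''}_t,Q^{AC''}_t):=(P^{AC'}_t,Q^{AC'}_t)$ and $v^{DC''}_t:=v^{DC*}_t$. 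Because the penalty terms at $\Omega''$ and $\Omega^{*}$ are identical, $obj^{M}(\Omega'')-obj^{M}(\Omega^{*})=obj(\Omega')-obj(\Omega^{*})<0$ for any $\xi>0$, contradicting optimality of $\Omega^{*}$ for PQ-opt-m (the paper asserts, without detailed verification, that $\Omega''$ is feasible for PQ-opt-m; granting that, the contradiction is immediate). Hence $(P^{AC'}_t,Q^{AC'}_t)=(P^{AC*}_t,Q^{AC*}_t)$, after which \eqref{eq:pdcac0} fixes $P^{DC}_t$ and the now-tight constraint \eqref{eq:vdcpdc} forces $v^{DC'}_t=v^{DC*}_t$, so $\Omega'=\Omega^{*}$. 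Your route compares $obj^{M}$ at two points carrying \emph{different} voltages, which entangles the penalty with the tracking term and is exactly why your chain only yields $v^{DC*}_t\geq v^{DCo}_t$; decoupling the two blocks of variables by the swap is what lets the proof go through with no condition on the magnitude of $\xi$.
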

\begin{proof}
Since the original optimization problem PQ-opt-o is supposed to be feasible, the modified optimization problem is also feasible according to theorem \ref{theom:bess_relax1}.
We denote the optimal solution of the modified optimization problem PQ-opt-m as $\Omega^{*}=\left \{P^{AC*}_{t}, Q^{AC*}_{t}, v^{DC*}_{t}, P^{DC*}_{t}\right \}$. We firstly prove that $\Omega^{*}$ is also feasible for PQ-opt-o. Constraint \eqref{eq:vdcpdc} is the only constraint we need to consider since $\Omega^{*}$ by definition already satisfies all the other constraints of PQ-opt-o. Suppose $\Omega^{*}$ is not feasible for constraint \eqref{eq:vdcpdc}, we have:
\begin{align}
    g=(v^{DC*}_t)^2+(\bold{1}^T\bold{v_c}-E)v^{DC*}_t+P^{DC*}_t R_s<0 \label{eq:vdcpdc3}
\end{align}
We derive the first-order derivative of the left side of constraint \label{eq:vdcpdc3} as:
\begin{align}
    \frac{\partial g}{\partial v^{DC_t}}=2v^{DC*}_t+(\bold{1}^T\bold{v_c}-E)\label{eq:gvdcE}
\end{align}
According to the battery TTC model in Fig. \ref{fig:ttc_model}, we have:
\begin{align}
    \bold{1}^T\bold{v_c}-E=-v_s-v^{DC}_t \label{eq:Evc}
\end{align}
Substituting \eqref{eq:Evc} to \eqref{eq:gvdcE} gives:
\begin{align}
    \frac{\partial g}{\partial v^{DC_t}}=v^{DC*}_t-v_s\label{eq:gvdcvs}
\end{align}
Since $\frac{\partial g}{\partial v^{DC_t}}>0$, there exists $v^{DC**}_t>v^{DC*}_t$ such that:
\begin{align}
    g=(v^{DC**}_t)^2+(\bold{1}^T\bold{v_c}-E)v^{DC**}_t+P^{DC*}_t R_s=0 \label{eq:vdcpdc4}
\end{align}
Because constraint \eqref{eq:vdcpdc4} is a subset of constraint \eqref{eq:vdcpdc2}, $v^{DC**}_t$ also satisfies constraint \eqref{eq:vdcpdc2}. Since $v^{DC**}_t>v^{DC*}_t$ and $\xi>0$, $v^{DC**}_t$ can give a smaller objective function value $f^{m**}$. This means {$\Omega^{*}$} is not the optimal solution which contradicts our initial assumption. So, $\Omega^{*}$ must be feasible for PQ-opt-o.

We then prove theorem \ref{theom:bess_relax2} by showing that $\Omega^{*}$ is also optimal for PQ-opt-o. Suppose the global optimal solution of PQ-opt-o is $\Omega^{'}=\left \{P^{AC'}_{t}, Q^{AC'}_{t}, v^{DC'}_{t}\right \}$ and $\Omega^{'}\neq\Omega^{*}$. This means the objective function value $f^{'}=f(\Omega^{'})<f^{*}=f(\Omega^{*})$. According to theorem \ref{theom:bess_relax1}, $\Omega^{'}$ is also feasible for PQ-opt-m. We can construct another feasible solution $\Omega^{''}$ of PQ-opt-m by $P^{AC''}_{t}:=P^{AC'}_{t},\,Q^{AC''}_{t}:=Q^{AC'}_{t},\,v^{DC''}_{t}:=v^{DC*}_{t}$.
If $P^{AC''}_{t}\neq P^{AC*}_{t}$ or $Q^{AC'}_{t}\neq Q^{AC*}_{t}$, $\Omega^{''} \neq \Omega^{*}$ and $f^{m''}=f^m(\Omega^{''})<f^{m*}=f^m(\Omega^{*})$. This contradicts our assumption that $\Omega^{*}$ is the optimal solution of PQ-opt-m. So $P^{AC'}_{t}= P^{AC*}_{t}$ and $Q^{AC'}_{t}= Q^{AC*}_{t}$ must hold. This means, according to constraints \eqref{eq:pdcac}-\eqref{eq:vdcpdc}, $v^{DC'}_t=v^{DC*}_t$ is valid. In other words $\Omega^{'}=\Omega{*}$ which contradicts our assumption that they are not equal. So the initial assumption cannot hold, the optimal solution of PQ-opt-m is also global optimal for PQ-opt-o.
\end{proof}
In this paper, the per unit values are according to the base power {of} 720 kVA and base voltage {of} 700 V. Theorem \ref{theom:bess_relax2} is valid for our BESS because the maximum charge/discharge current is $i^{DCmax}=\frac{S^{ACmax}}{v^{DCmin}_t}=\frac{720 \,\text{kVA}}{600\, \text{V}}= 1.167\, p.u.$
Note this is an estimated value that is actually much larger than the maximum allowed value of charge/discharge current {of the adopted BESS}. According to the identified parameters of the battery TTC model, $R_s<0.045\, p.u.$, this means the maximum voltage $v_s$ in the battery TTC model is less than $v^{max}_s=i_{DCmax} R_s=0.051 \,p.u.$. Considering $V^{DCmin}=0.86\, p.u.$, $\frac{\partial g}{\partial v^{DC_t}}=v^{DC*}_t-v_s>0$. 
According to theorems \ref{theom:bess_relax1}-\ref{theom:bess_relax2}, we can find the global optimal solution of the original nonconvex optimization problem PQ-opt-o by solving the modified convex optimization problem PQ-opt-m. This gives us better computational efficiency {for} the real-time control of {the} BESS.
In summary, we propose the following solution algorithm \ref{alg:heuris}.
\begin{algorithm}\label{alg:heuris}
 \caption{Optimization Solution Algorithm}
\SetAlgoLined
\KwResult{Optimal Power Set-Points $P^{AC}_{t}, Q^{AC}_{t}$}
 Initialization $P^{AC}_{0,t}, Q^{AC}_{0,t}$ Based on Equations \eqref{eq:droop_p}-\eqref{eq:droop_q}\;
  \lIf{$P^{AC}_0<0$}{
   Replace \eqref{eq:pdcac} by $P^{DC}_t=\eta P^{AC}_t$
   }\lElse{Replace \eqref{eq:pdcac} by $P^{DC}_t=\frac{P^{AC}_t}{\eta}$\bf{end}}
   {Use fmincon Solver in MATLAB to} Solve the Modified Optimization Problem PQ-opt-m\\
\end{algorithm}

{The first stage of this algorithm selects the correct charging/discharging equations based on the sign of the initial power set-points. In this way, we can avoid the usage of binary variables (which makes it harder to solve optimization problems) in formulating the optimization models. No iterations are required in this stage. Then we make use of the \textit{fmincon} solver in MATLAB after the initial processing of the optimization model. Since PQ-opt-m is convex, the \textit{fmincon} solver is sufficient to find the global optimal solution of the problem.}

\end{subequations}
 \subsection{Discretization}
The solution {Algorithm} \ref{alg:heuris} solves PQ-opt-m and performs better than solving the PQ-opt-o. We propose to further improve the computational efficiency of this algorithm by discretizing the feasible region of PQ-opt-m. The major motivation is to avoid solving the optimization problem online in the real-time control of a BESS. This consideration is practical because: (1) nonlinear optimization solvers may not be available or too expensive in many real-world applications, (2) real-time control requires stable computational performance. 
In order to provide an adequate frequency control, we should achieve a real-time control latency down to 100 ms, we can approximate the DC-bus voltage by using the real-time measurement instead of using the battery TTC model. In other words, $v^{DC}_t$ is now a parameter in PQ-opt-m as $v^{AC}_t$. Similarly, for the battery SoC constraints \eqref{eq:socupd}-\eqref{eq:bound_soe}, we can directly use the battery SoC measurement to approximate the battery SoC during the control loop. In this way, we can approximate the voltage-dependent optimization problem PQ-opt-m with the static one. We propose to solve the following static optimization problem in order to discretize the feasible region of PQ-opt-m.
\begin{subequations}
\begin{align}
    \underset{\Omega^{S}}{\operatorname{argmax}}\;obj^{S}:= \left\{\Omega^{S} \in [\eqref{eq:bound_pq},\eqref{eq:bound_soe},\eqref{eq:pdcac0}]\right\}
\end{align}
Where $\Omega^{S}=\left\{P^{AC}_t,Q^{AC}_t\right\}$ is the set of optimization variables. The objective function is formulated as $f^{S}=(P^{AC}_t)^2+(Q^{AC}_t)^2$. We denote this optimization problem as PQ-opt-s. Sequentially solving this optimization problem offline can quantify the feasible region of PQ-opt-m by using the length of the vector $\left | S \right |^{max}=\left | (P^{AC}_t, Q^{AC}_t) \right |^{max}$ shown in Fig. \ref{fig:vector_compare2}. 
In this way, we discretize the feasible region of PQ-opt-m by a $1^{\circ}$ resolution over the $360^{\circ}$ two dimension active power and reactive power $(P^{AC}_t, Q^{AC}_t)$ plane. In other words, we solve PQ-opt-s 360 times at $\frac{Q^{AC}_t}{P^{AC}_t}=tan(1^{\circ}),tan(2^{\circ}),...tan(360^{\circ})$. Where $tan(\cdot)$ is the tangent function. We then get the values of $\left | {S}_\theta \right |^{max}$ as an array for $\theta={1^{\circ},2^{\circ},...360^{\circ}}$. Note the discretization resolution can be smaller according to the accuracy requirement {{of the modeler}}. {{Obviously}}, {smaller resolution of the discretization leads to higher accuracy.} The proposed discretization methodology is applicable for any given discretization resolution. After obtaining $\left | S_\theta \right |^{max}$ offline, we propose algorithm \ref{alg:fast} for the real-time control of BESS:
\begin{figure}
\vspace{-0.5cm}
\hspace*{-0.4cm}
    \centering
    \includegraphics[width=0.8\linewidth, angle =90]{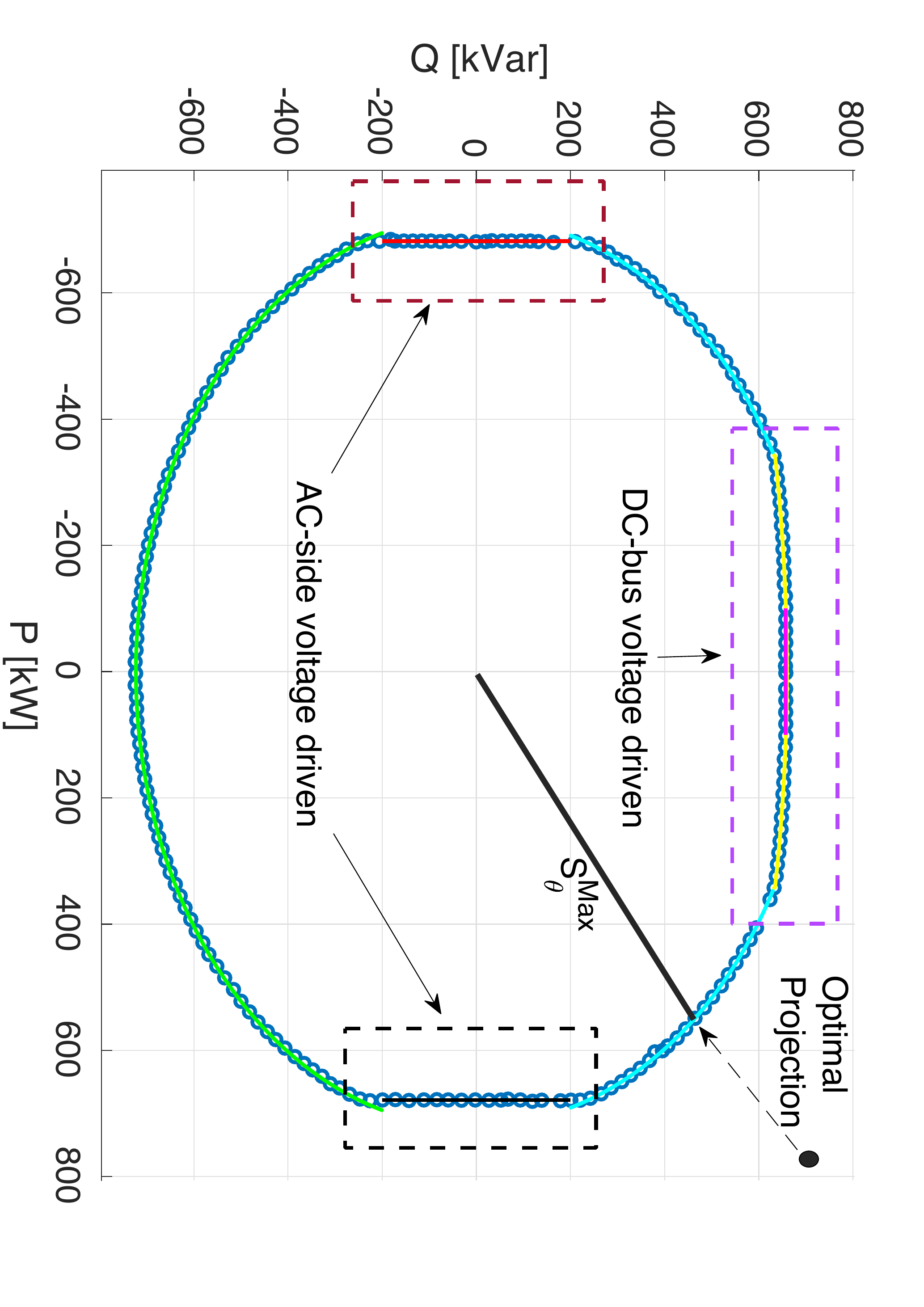}
    \vspace{-0.6cm}
    \caption{{Optimal power set-point projection by using PQ-opt-o(m) and discretization of the BESS feasible region in the PQ-plane.}}
    \label{fig:vector_compare2}
\end{figure}
\begin{algorithm}\label{alg:fast}
 \caption{Fast Real-time Control Algorithm}
\SetAlgoLined
\KwResult{Optimal Power Set-Points $P^{AC}_{t}, Q^{AC}_{t}$}
 Initialization $P^{AC}_{0,t}, Q^{AC}_{0,t}$ Based on Equations \eqref{eq:droop_p}-\eqref{eq:droop_q}\;
 \eIf{$P^{AC}_{0,t}=0$}{
     \lIf{$Q^{AC}_{0,t}\geq0$}{$\theta=90^{\circ}$
     }\lElse{$\theta=270^{\circ}\;$\bf{end}}
 }{
 \lIf{$P^{AC}_{0,t}>0$ and $Q^{AC}_{0,t}\geq0$}{$\theta=\left \lceil{arctan(\frac{Q^{AC}_{0,t}}{P^{AC}_{0,t}})}\right \rceil$ 
 }\lElse{
 $\theta=\left \lceil{arctan(\frac{Q^{AC}_{0,t}}{P^{AC}_{0,t}})}\right \rceil+180^{\circ}\;$\bf{end}}
 }
 \lIf{$\left | (P^{AC}_{0,t}, Q^{AC}_{0,t}) \right |\leq \left | {S}_\theta \right |^{max}$}{$(P^{AC}_{t}, Q^{AC}_{t})=(P^{AC}_{0,t}, Q^{AC}_{0,t})$}
 \lElse{$P^{AC}_{t}={S}_\theta cos(\theta)$,$\;Q^{AC}_{t}={S}_\theta sin(\theta)$ \bf{end}}
\end{algorithm}
Where $\left \lceil \cdot \right \rceil$ is the ceil function which returns the minimum integer larger than $\cdot $, $cos(\cdot),sin(\cdot)$ are the cosine and sine functions. We show later in the experiment section the comparative performance of algorithms \ref{alg:heuris} and \ref{alg:fast}.
{In Algorithm 2, there are two stages. The first stage is to select the correct angle for the discretized region of the capability curve. The second stage is to give the acceptable power set-point by comparing the length of the complex power vector with maximum allowed length of the complex power vector. Both stages do not require iterations.}
\end{subequations}
\section{Experimental Results}
\begin{figure}[!htbp]
	\vspace{-0.1cm}
    \centering
    \includegraphics[width=\linewidth]{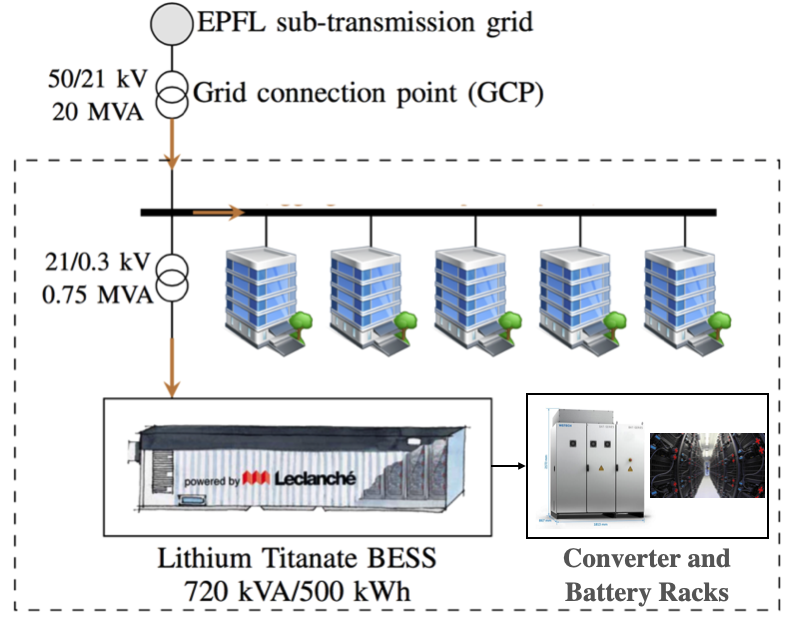}
    \caption{BESS installation set-up on the EPFL campus.}
    \label{fig:bess_epfl_ins}
\end{figure}
\begin{figure}[!htbp]
	\hspace*{-0.2cm}
    \centering
    \includegraphics[width=\linewidth]{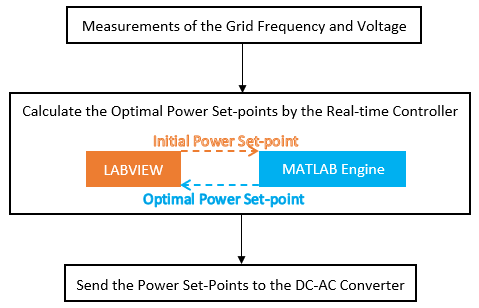}
    \vspace{-0.7cm}
    \caption{Measurement, software interface and communication of the real-time control.}
    \label{fig:inter}
\end{figure}
In order to validate the proposed algorithms, we have carried out experiments on the 720 kVA / 560 kWh BESS installed on EPFL campus in Lausanne, Switzerland. The BESS installation set-up is shown in Fig. \ref{fig:bess_epfl_ins}. The battery technology is Lithium-Titanate-Oxide (LTO). A 4-quadrant DC-AC converter is used to connect the DC-bus of the battery through a 300 V / 20 kV {step-up} transformer to the EPFL campus power grid. The parameters of the BESS and battery TTC model can be found in \cite{zecchino2019optimal}. The real-time control and monitor system of the BESS is developed in Labview 2018. We use Modbus TCP/IP protocol for the communication. The DC-AC converter status, battery cell status and AC side grid information are monitored in the real-time control loop. Algorithm \ref{alg:heuris} is coded within the YALMIP optimization tool \cite{Lofberg2004}. Both algorithms \ref{alg:heuris}-\ref{alg:fast} are executed in the MATLAB runtime engine of Labview. Fig. \ref{fig:inter} shows the interface between the measurement, Labview, MATLAB and the DC-AC converter. Algorithm \ref{alg:heuris} {requires} 200 ms to update the real-time control loop, while algorithm \ref{alg:fast} {requires} 100 ms. We show the performance of the proposed real-time control {to provide} frequency control and voltage support services to the grid. Note that, when the initial power set-points are outside the feasible region of the DC-AC converter capability curve, the protection set-up by the manufacturer of the BESS automatically reset the power set-points to zero if Algrorithm 1 or 2 is not used.
We run a one-hour experiment for each algorithm \ref{alg:heuris} and algorithm \ref{alg:fast}. In our experiment, we are using a BATI720 converter which requires $v^{DCmin}=600\, \text{V}, v^{DCmax}=800\, \text{V}$. We set $SoC^{min}=10\%, SoC^{max}=90\%$. 

To show the effectiveness of the proposed algorithms, we conduct experiments for two scenarios. In the first scenario, the droop coefficients are $\alpha_0 = 8\, \text{MW/Hz}, \beta_0 = 8.39\, \text{kVar/V}$. In the second scenario, we increase the droop coefficients to $\alpha_0 = 11\, \text{MW/Hz}, \beta_0 = 8.39\, \text{kVar/V}$. In this way, we force more initial power set-points to be outside the BESS feasible operation region.
\begin{figure}[!htbp]
	\vspace{-0.8cm}
	\hspace*{-0.4cm}
	\centering
	\includegraphics[width=1.02\linewidth]{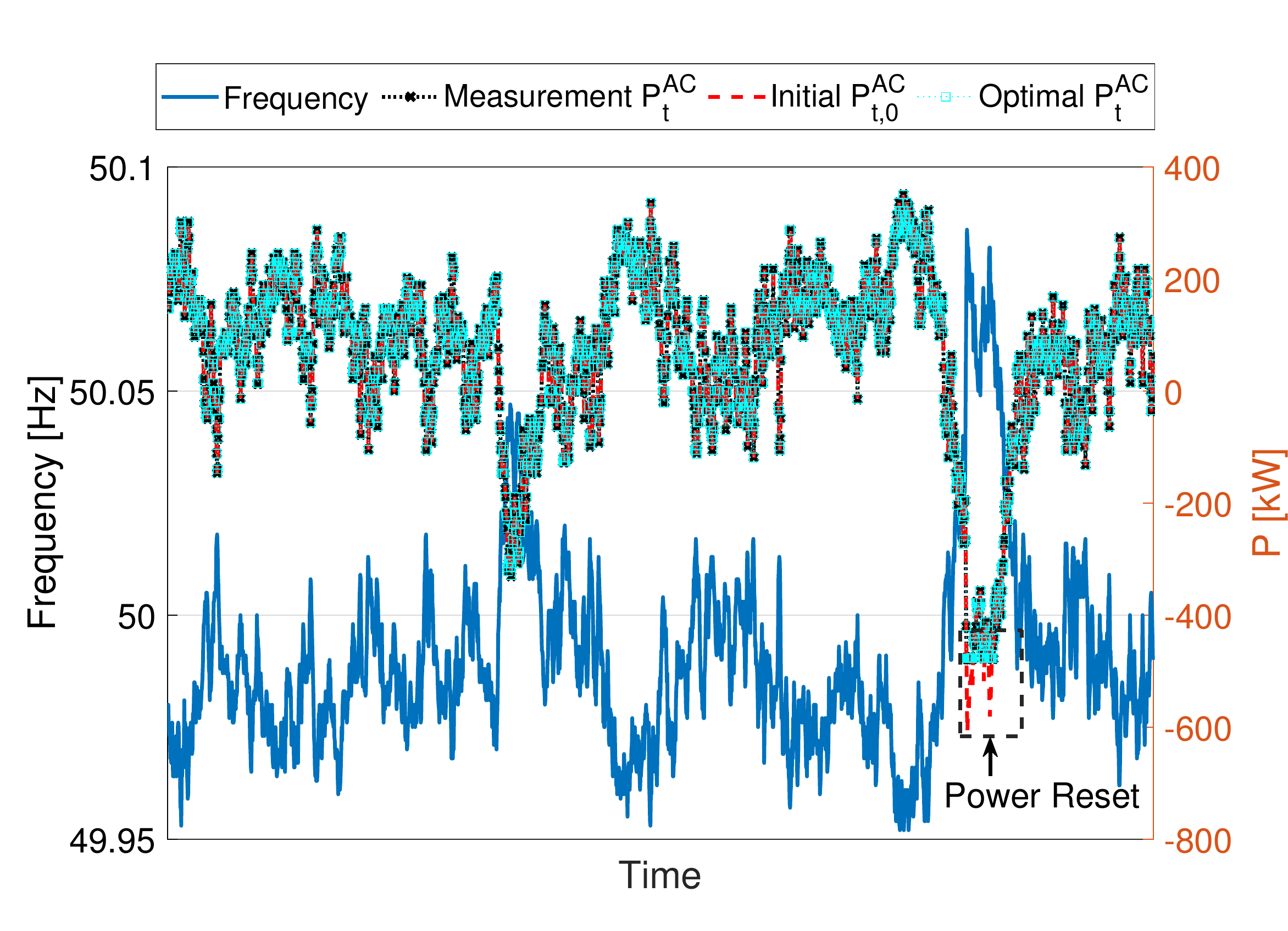}
	\vspace{-0.5cm}
	\caption{Algorithm \ref{alg:heuris} frequency control result for scenario 1.}
	\label{fig:frequency_response1}
\end{figure}
\begin{figure}[!htbp]
	\vspace{-0.5cm}
	\centering
	\hspace*{-0.4cm}
	\includegraphics[width=1.02\linewidth]{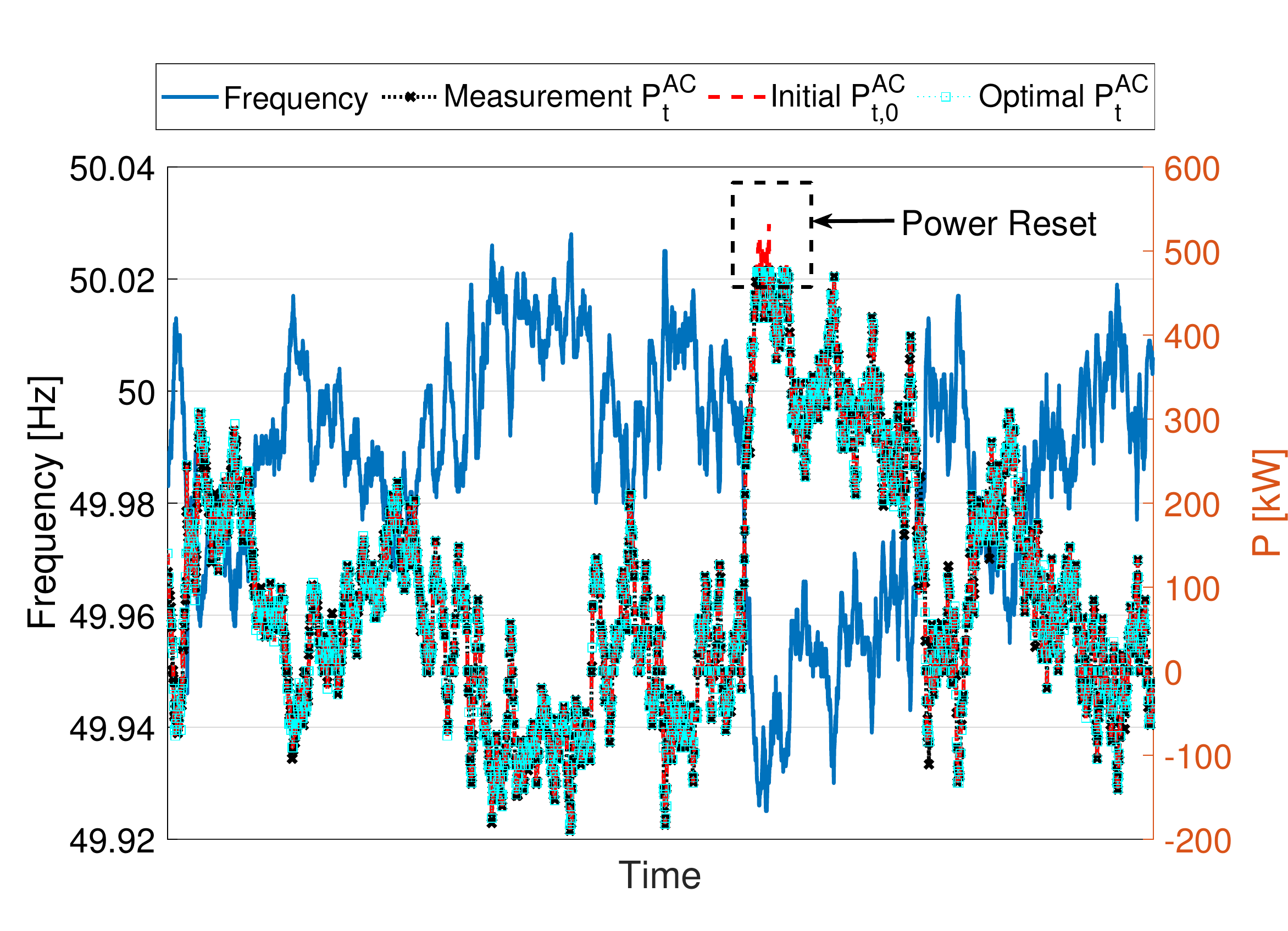}
	\vspace{-0.5cm}
	\caption{Algorithm \ref{alg:fast} frequency control result for scenario 1.}
	\label{fig:frequency_response2}
\end{figure}
\begin{figure}[!htbp]
	\vspace{-0.5cm}
	\centering
	\hspace*{-0.4cm}
	\includegraphics[width=1.02\linewidth]{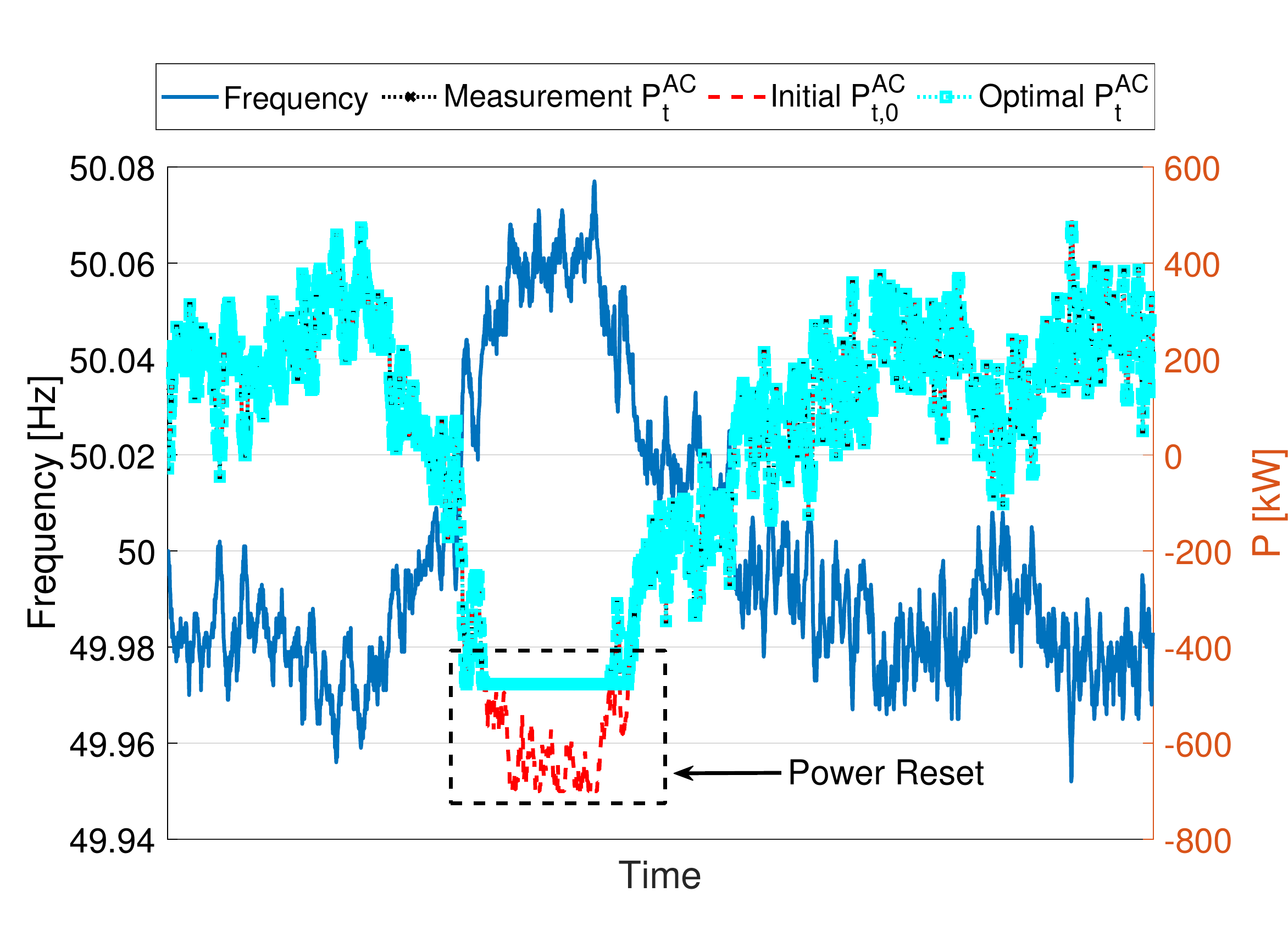}
	\vspace{-0.5cm}
	\caption{Algorithm \ref{alg:heuris} frequency control result for scenario 2.}
	\label{fig:frequency_response3}
\end{figure}
\begin{figure}[!htbp]
	\vspace{-0.5cm}
	\centering
	\hspace*{-0.4cm}
	\includegraphics[width=1.02\linewidth]{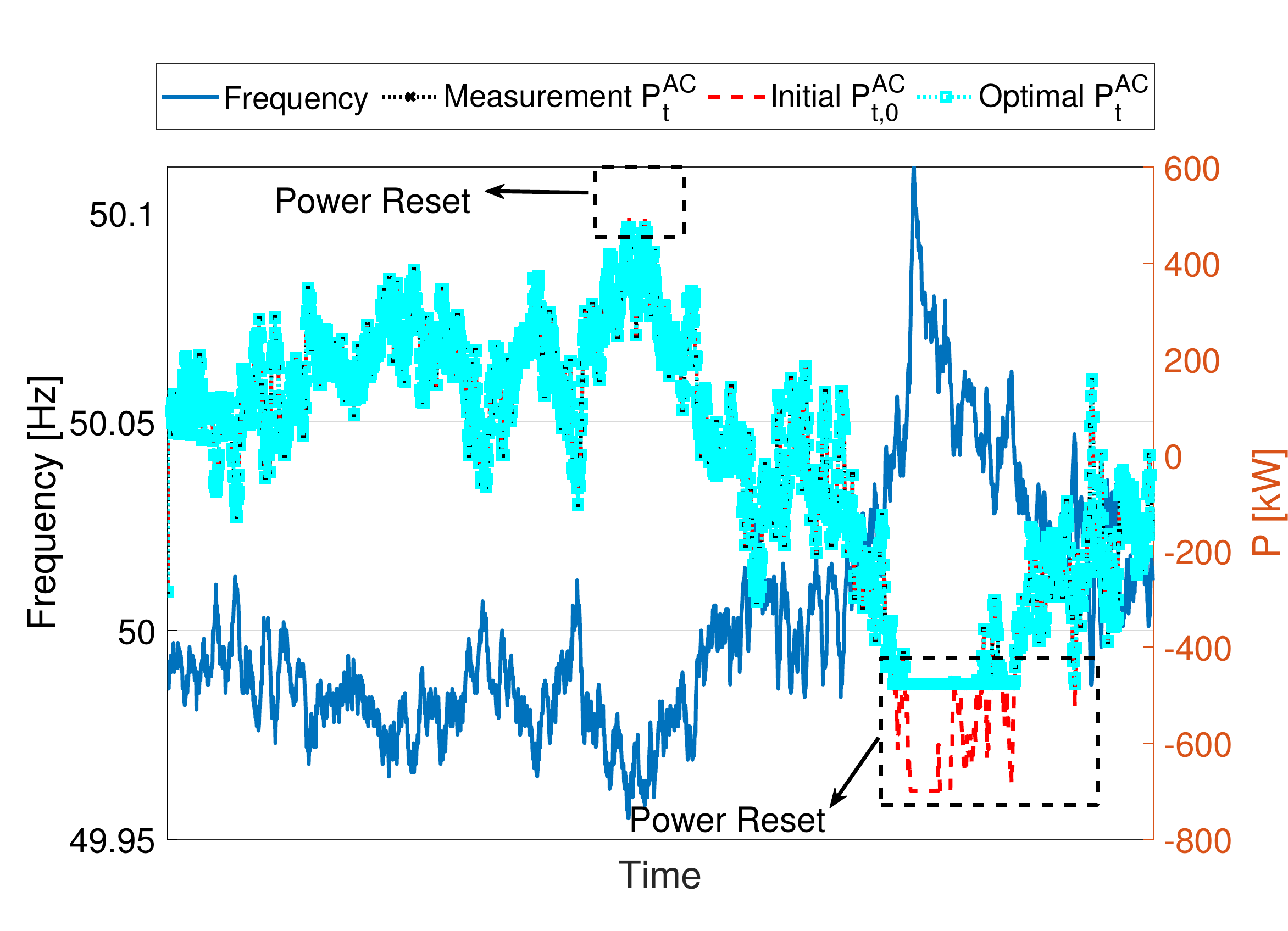}
	\vspace{-0.5cm}
	\caption{Algorithm \ref{alg:fast} frequency control result for scenario 2.}
	\label{fig:frequency_response4}
\end{figure}
\subsection{{Frequency Control}}
The frequency control results of Algorithm \ref{alg:heuris} and Algorithm \ref{alg:fast} are shown in Fig. \ref{fig:frequency_response1}, Fig. \ref{fig:frequency_response2}, Fig. \ref{fig:frequency_response3} and Fig. \ref{fig:frequency_response4}. Positive/negative values of the active power means discharging/charging of the BESS. In all these figures, we can observe the presence of some initial power set-points which are optimized effectively by our proposed algorithms. These results show that large grid frequency could happen any time during the operations and the frequency control services should be continuously provided. 
\subsection{{Voltage Support}}
The voltage support provided by using Algorithm \ref{alg:heuris} and Algorithm \ref{alg:fast} are shown in Fig. \ref{fig:voltage_support1}, Fig. \ref{fig:voltage_support2}, Fig. \ref{fig:voltage_support3} and Fig. \ref{fig:voltage_support4}. Similarly, positive/negative values of the reactive power means discharging/charging of the BESS. As it can be seen, there are some very large initial power set-points in Fig. \ref{fig:voltage_support4} which are optimized by Algorithm \ref{alg:fast}. The reason of a small number of large initial power set-points is due to the small value of the voltage droop coefficient that we are using in the experiments. Compared with voltage support, it is more important to provide frequency control services to the grid. From the converter capability curve we have shown before, the BESS ability to provide active power is also stronger than providing reactive power to the grid.
\begin{figure}[!htbp]
	\vspace{-1cm}
	\centering
	\hspace*{-0.4cm}
	\centering
	\includegraphics[width=1.02\linewidth]{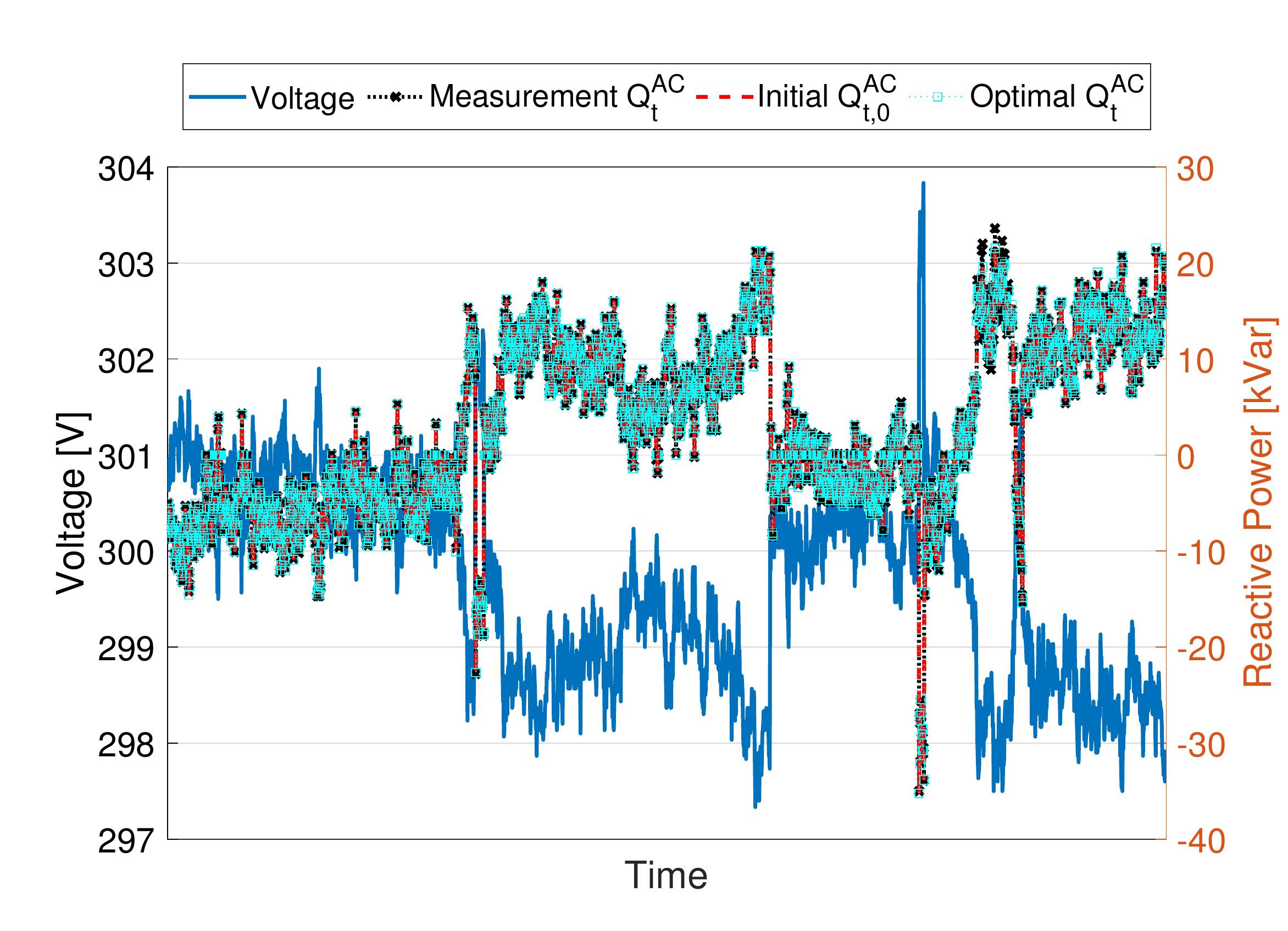}
	\vspace{-0.5cm}
	\caption{Algorithm \ref{alg:heuris} voltage control result for scenario 1.}
	\label{fig:voltage_support1}
\end{figure}
\begin{figure}[!htbp]
	\vspace{-0.8cm}
	\centering
	\hspace*{-0.4cm}
	\includegraphics[width=1.02\linewidth]{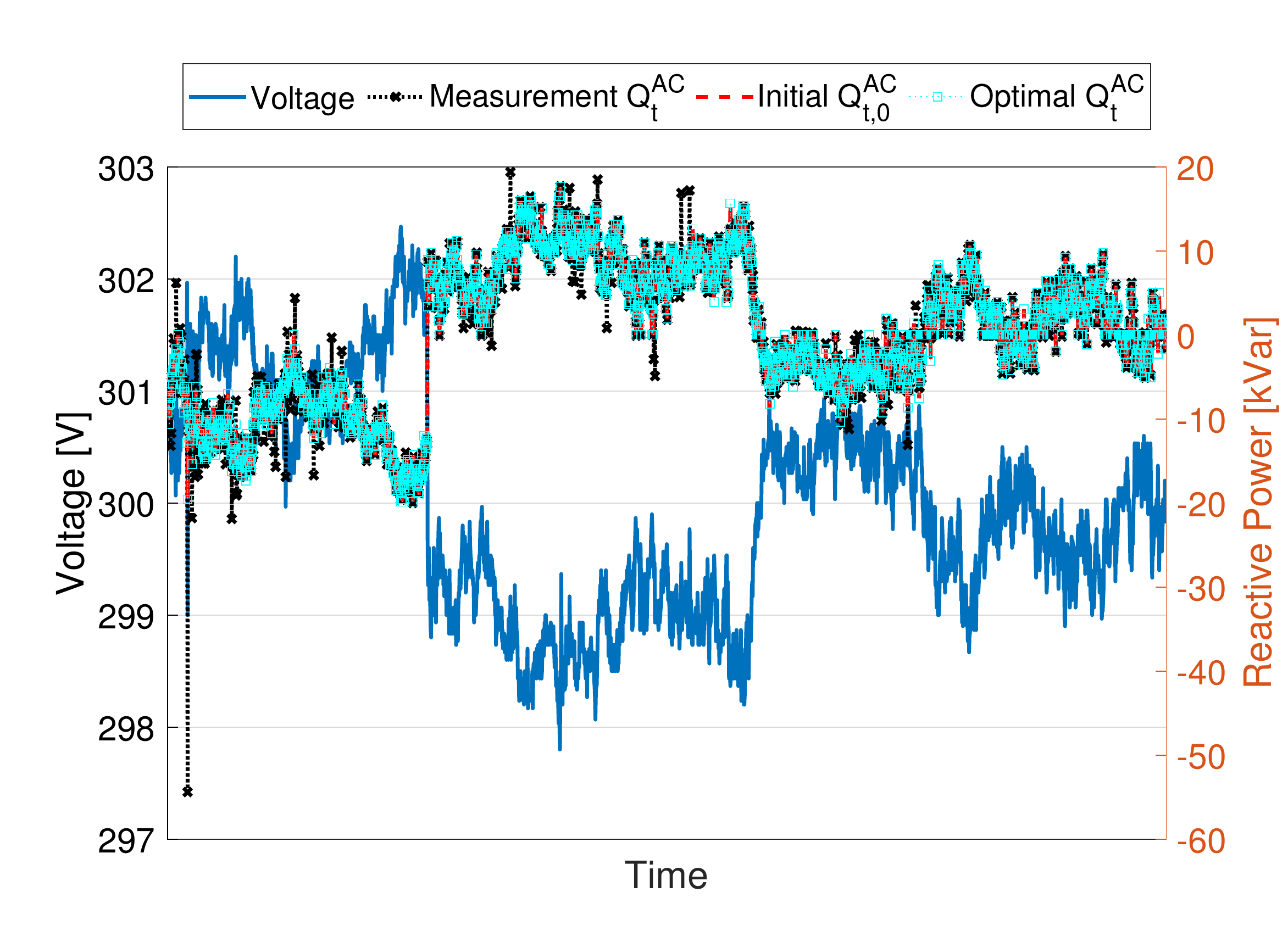}
	\vspace{-0.5cm}
	\caption{Algorithm \ref{alg:fast} voltage control result for scenario 1.}
	\label{fig:voltage_support2}
\end{figure}
\begin{figure}[!htbp]
	\vspace{-0.5cm}
	\hspace*{-0.4cm}
	\includegraphics[width=1.02\linewidth]{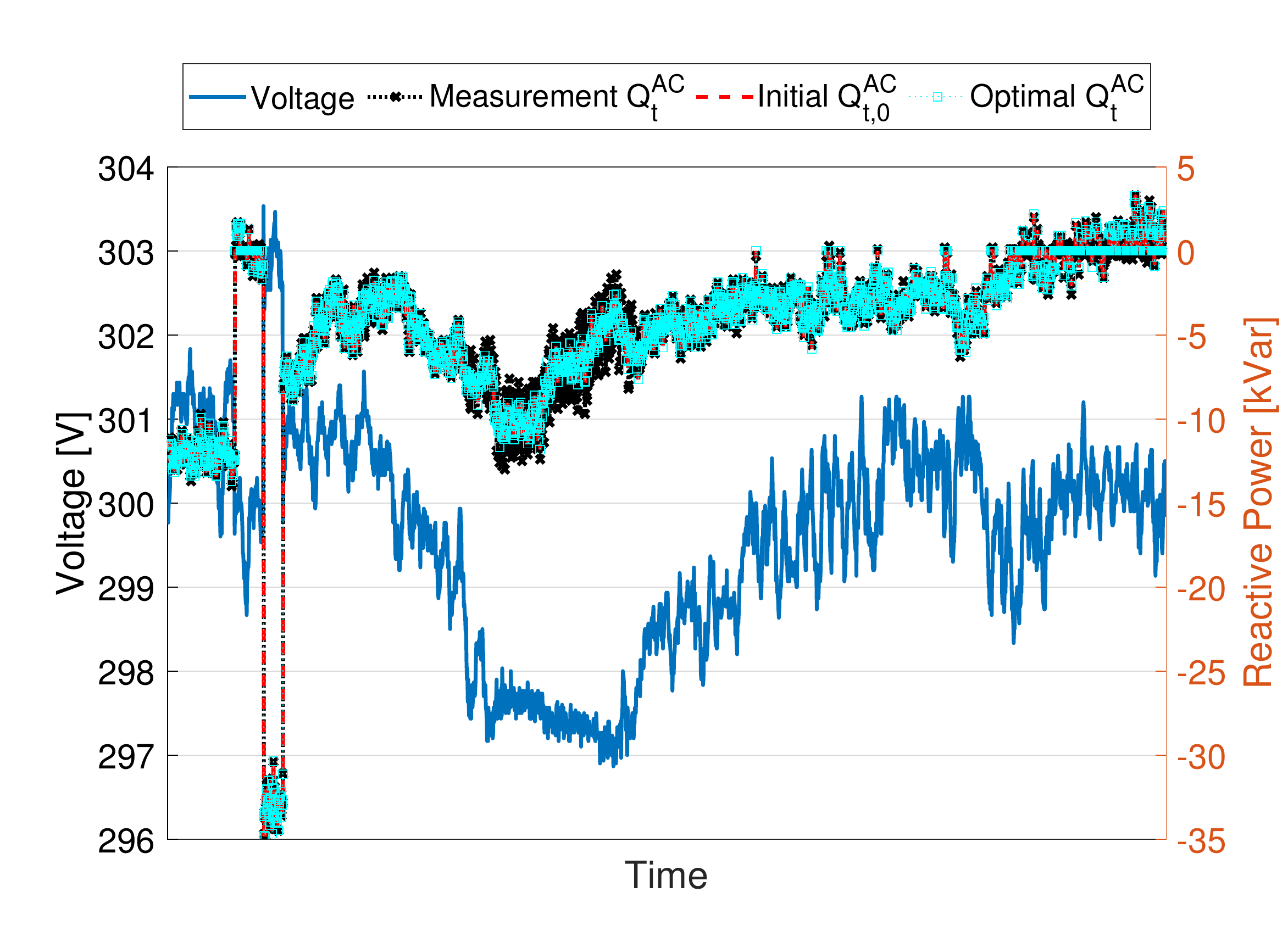}
	\vspace{-0.5cm}
	\caption{Algorithm \ref{alg:heuris} voltage control result for scenario 2.}
	\label{fig:voltage_support3}
\end{figure}
\begin{figure}[!htbp]
	\vspace{-0.5cm}
	\centering
	\hspace*{-0.4cm}
	\includegraphics[width=1.02\linewidth]{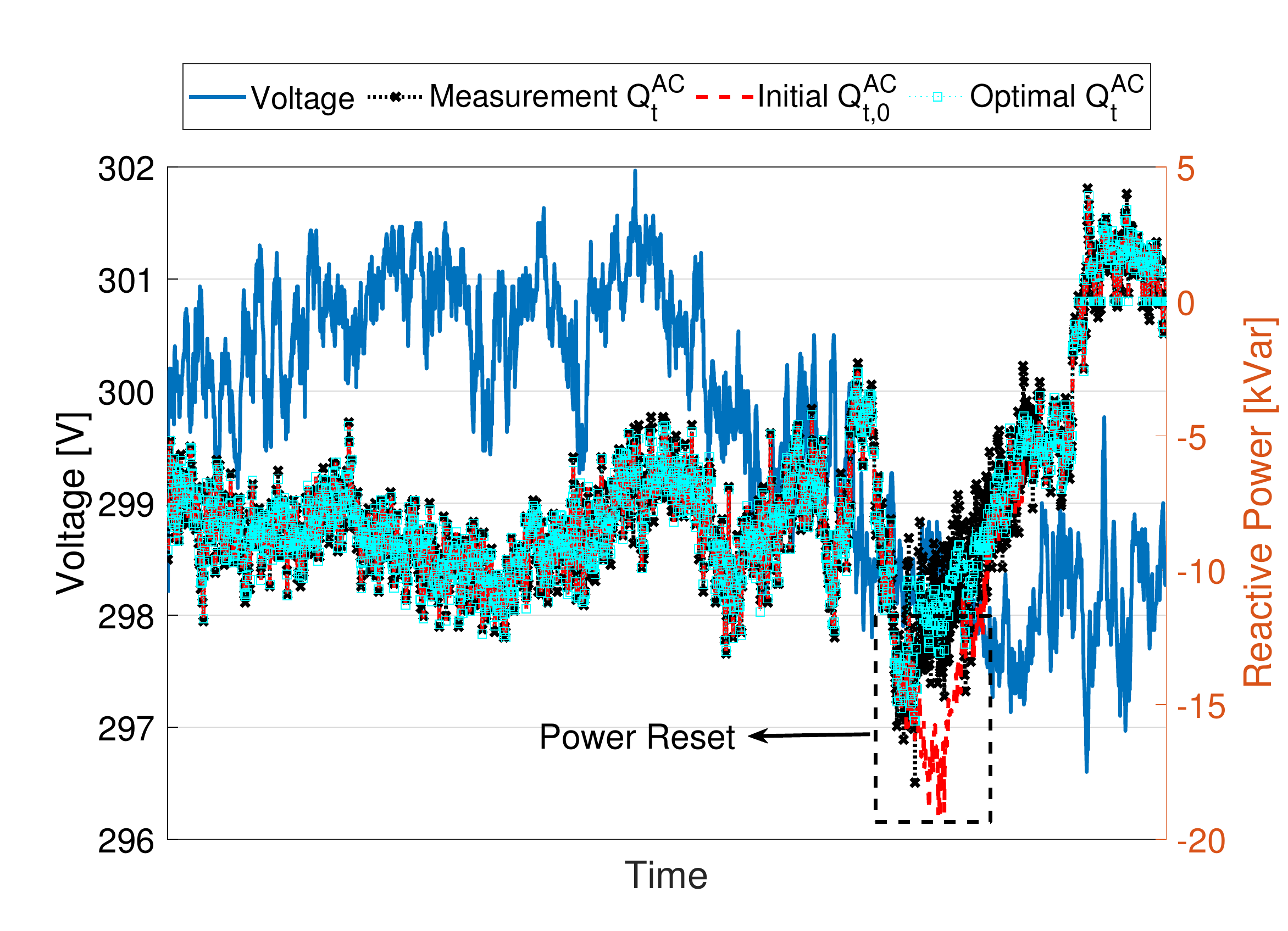}
	\vspace{-0.5cm}
	\caption{Algorithm \ref{alg:fast} voltage control result for senario 2.}
	\label{fig:voltage_support4}
\end{figure}
\subsection{Computational Efficiency}
The computation time results shown in Fig. \ref{fig:time1}, Fig. \ref{fig:time2}, Fig. \ref{fig:time3} and Fig. \ref{fig:time4} are the histogram plot of the instances of the BESS real-time control loop. We can see that, for most cases, the computation time of Algorithm \ref{alg:heuris} is around 80 ms. The maximum computation time is always less than 120 ms which gives sufficient time to update our control-loop within 200 ms. For Algorithm \ref{alg:fast}, there is a huge improvement compared to Algorithm \ref{alg:heuris}. The computation time is around 7 ms. The maximum computation time of Algorithm \ref{alg:fast} is always less than 25 ms which gives sufficient time to update our control-loop within 100 ms. This achievement is essential for the BESS real-time control.
\begin{figure}[!htbp]
	\vspace{-0.5cm}
	\centering
	\includegraphics[width=0.95\linewidth]{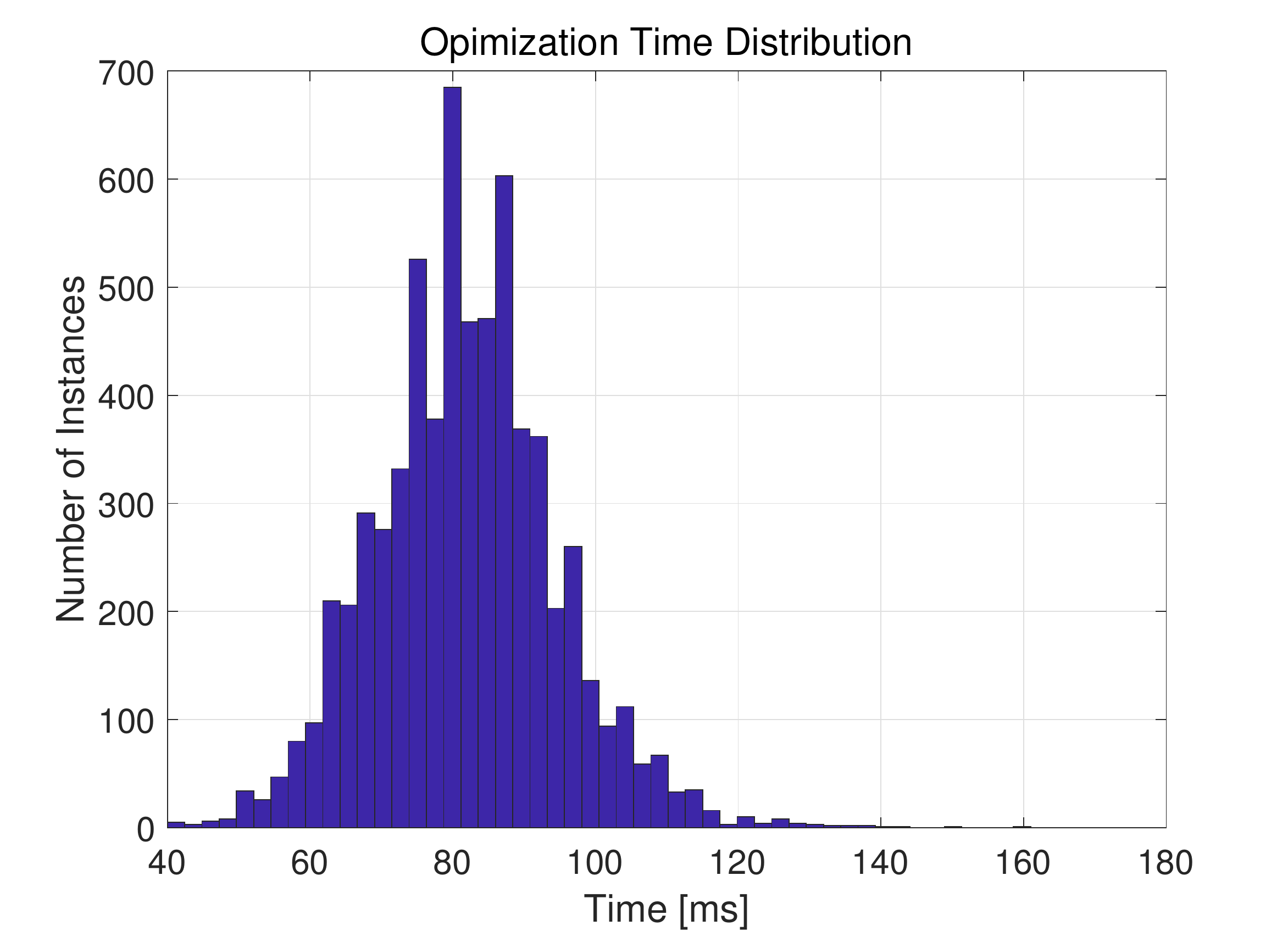}
	\vspace{-0.2cm}
	\caption{Algorithm \ref{alg:heuris} computational efficiency for scenario 1.}
	\label{fig:time1}
\end{figure}
\begin{figure}[!htbp]
	\vspace{-0.4cm}
	\centering
	\includegraphics[width=0.95\linewidth]{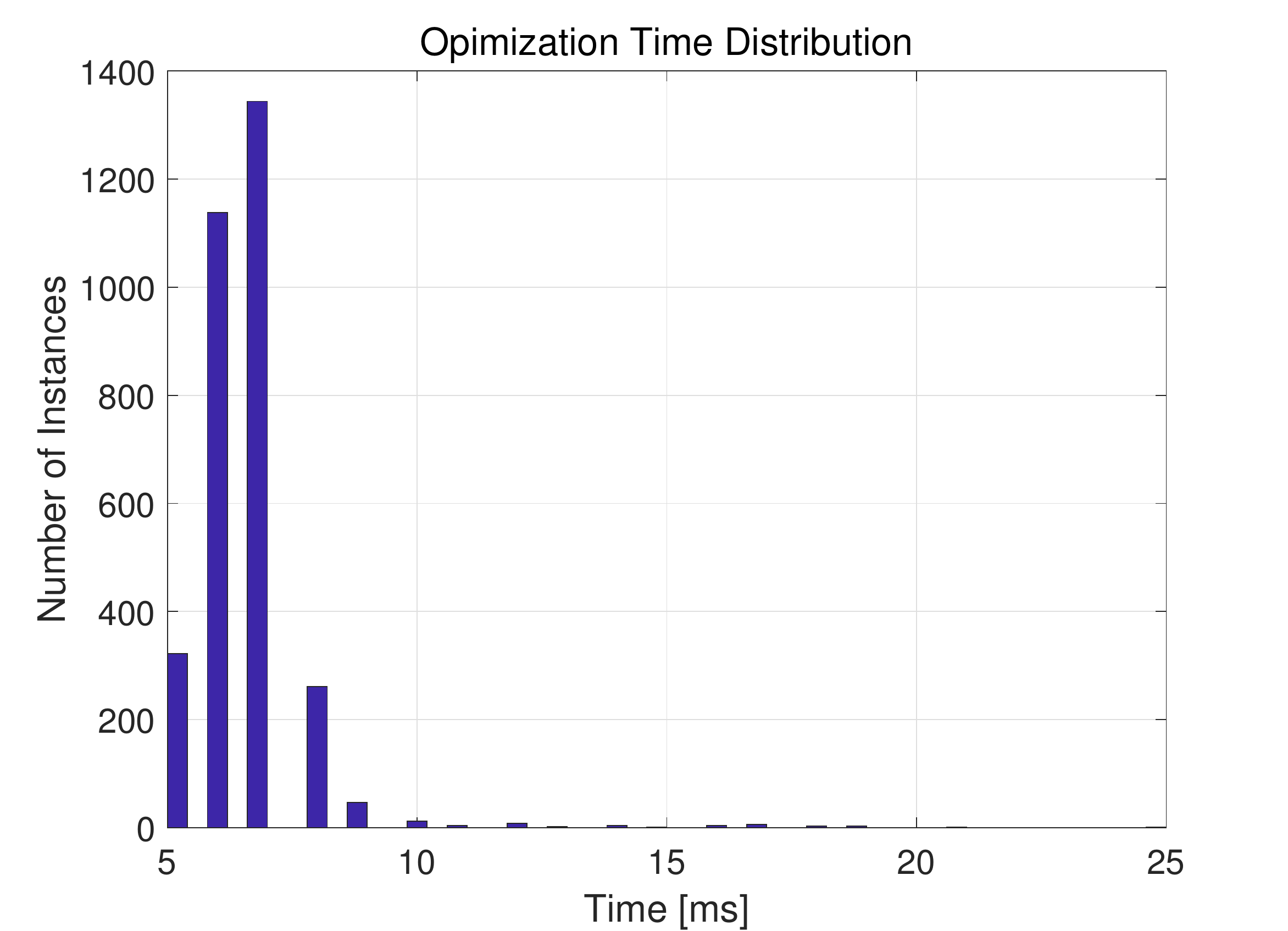}
	\vspace{-0.2cm}
	\caption{Algorithm \ref{alg:fast} computational efficiency for scenario 1.}
	\label{fig:time2}
\end{figure}
\begin{figure}[!htbp]
	\centering
	\vspace{-0.4cm}
	\includegraphics[width=0.95\linewidth]{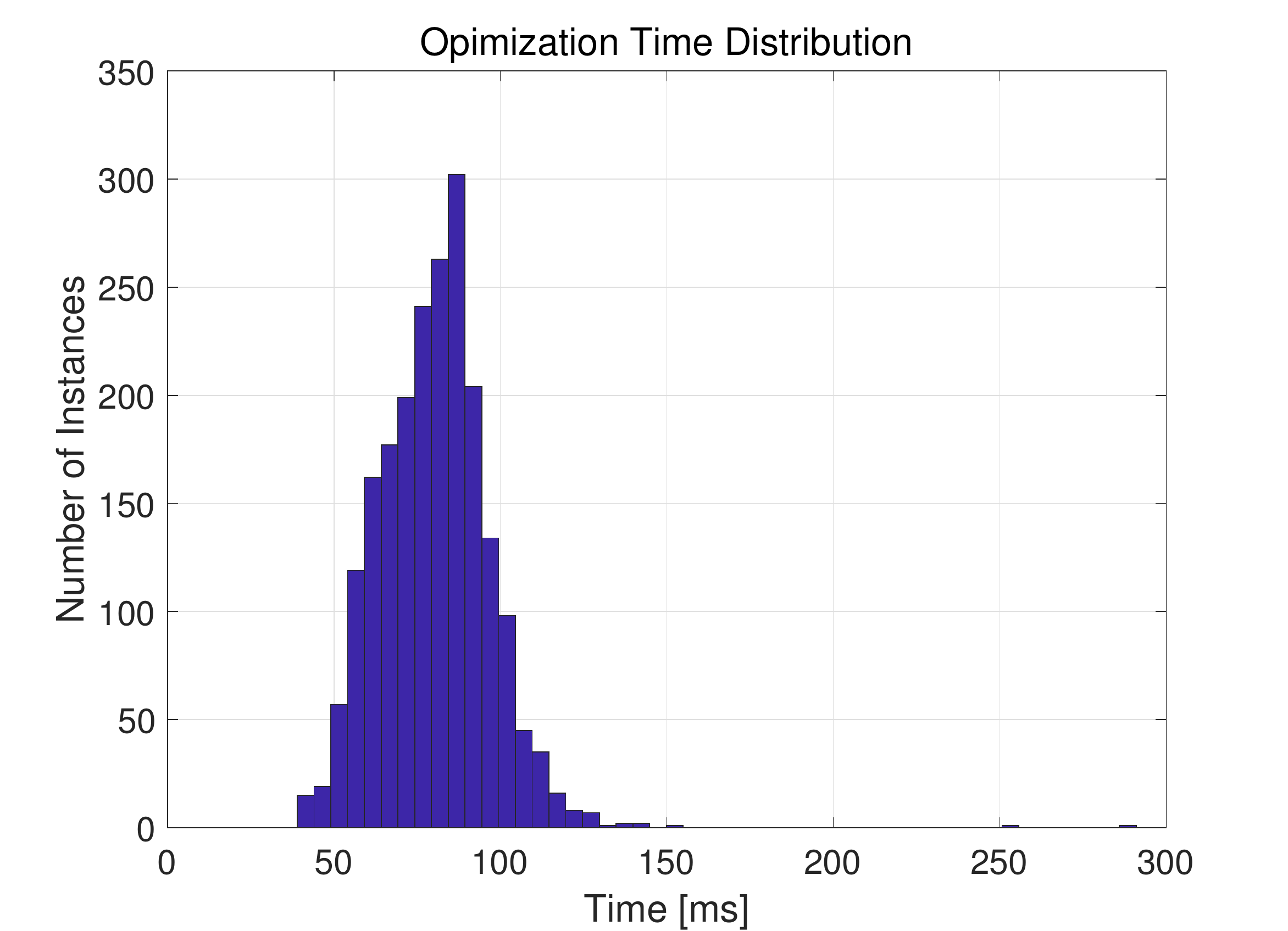}
	\vspace{-0.2cm}
	\caption{Algorithm \ref{alg:heuris} computational efficiency for scenario 2.}
	\label{fig:time3}
\end{figure}
\begin{figure}[!htbp]
	\vspace{-0.5cm}
	\centering
	\includegraphics[width=0.95\linewidth]{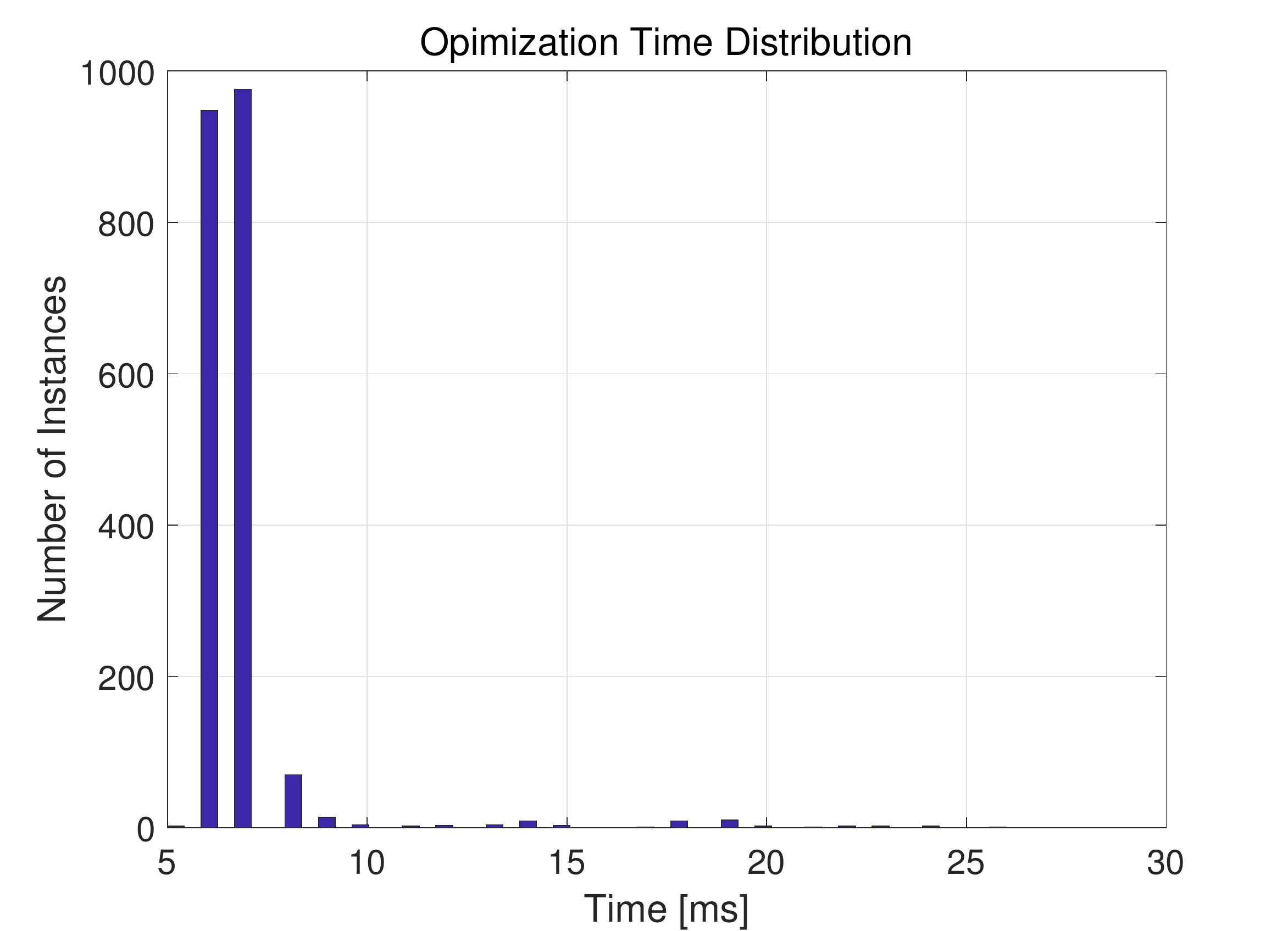}
	\vspace{-0.2cm}
	\caption{Algorithm \ref{alg:fast} computational efficiency for scenario 2.}
	\label{fig:time4}
\end{figure}
\subsection{Energy}
We define the following metrics to quantify the performance of the proposed BESS real-time control in terms of the provided energy to the grid.\\
Since the battery is discharging when $P^{AC}_{t}>0$, the total discharged energy (TDE) is defined as:
\begin{subequations}
\begin{align}
	\text{TDE}=\sum_{t\in \left\{ t | P^{AC}_{t}>0 \right \}}P^{AC}_{t}\Delta t \label{eq:tde}
\end{align}
Since battery is charging when $P^{AC}_{t}<0$, the total charged energy (TCE) is defined as:
\begin{align}
	\text{TCE}=\sum_{t\in \left\{ t | P^{AC}_{t}<0 \right \}}|P^{AC}_{t}|\Delta t \label{eq:tce}
\end{align}
Since both Algorithm 1 and Algorithm 2 can guarantee the continuous operation of the BESS when the initial power set-point is not feasible, we define the provided energy during these time intervals as the total sustained energy (TSE) i.e. the energy which cannot be provided without using Algorithm \ref{alg:heuris} or Algorithm \ref{alg:fast}. In other words, if Algorithm \ref{alg:heuris} or Algorithm \ref{alg:fast} is not used, the converter will automatically re-set the power set-points to zero when the initial power set-points are outside the feasible region of the capability curve. The TSE is defined as:
\begin{align}
	\text{TSE}=\sum_{t \in \left\{ t | P^{AC}_{0,t} \notin (1c) \right \}} |P^{AC}_{t}|\Delta t \label{eq:tse}
\end{align}
\end{subequations}
We summarize the {values of these metrics} of both scenarios in Table \ref{tab:energy}. More TSE are provided in scenario 2. This is reasonable since larger droop coefficients force more initial power set-points to be outside of the feasible region of the converter capability curve. By using our proposed algorithms, we an guarantee the continuous operation of the BESS.
\begin{table}[!htbp]
  \centering
  \caption{Charged/Discharged Energy and Reused Energy}
    \begin{tabular}{|c|c|c|c|c|}\hline
    Droop Coefficient & \multicolumn{2}{c|}{8 MWh/Hz} & \multicolumn{2}{c|}{11 MWh/Hz} \\ \hline
    Algorithm & 1     & 2     & 1     & 2 \\ \hline
    TDE [kWh] & 65.77 & 114.52 & 127.78 & 106.70 \\\hline
    TCE [kWh] & 55.96 & 18.36 & 102.11 & 110.16 \\\hline
    TSE [kWh] & 2.08  & 4.17  & 58.95 & 48.09 \\\hline
    \end{tabular}%
  \label{tab:energy}%
  \vspace{-0.4cm}
\end{table}%
\section{Conclusion}
To provide reliable frequency control and voltage support from BESS, we formulate the real-time control problem in a nonlinear optimization model PQ-opt-o taking into account the {voltage-dependent} DC-AC converter capability and battery security constraints. We propose, and {rigorously} prove to use the convex optimization model PQ-opt-m, and solution algorithm \ref{alg:heuris} to find the global optimal power set-points of the original optimization model PQ-opt-o. To improve the computational performance of PQ-opt-m, we propose a fast real-time control algorithm \ref{alg:fast} by approximating PQ-opt-m and discretizing the feasible region of the optimization model. Our optimization model and solution algorithms are proved analytically and validated experimentally. The experimental results show that we can reach a time latency of 200 ms to update the real-time control loop by using algorithm \ref{alg:heuris} which also gives accurate optimal power set-points solutions. Algorithm \ref{alg:fast} achieves 100 ms of time latency to update the real-time control loop. This algorithm avoids the usage of optimization solver in the real-time control of BESS though sacrifices a bit the solution accuracy. The proposed real-time control improves the efficiency, security and continuity of the BESS operations in providing frequency control and voltage support services to the grid.    
\bibliographystyle{IEEEtran}
\bibliography{biblio}{}
\begin{IEEEbiography}[{\includegraphics[width=1in,height=1.25in]{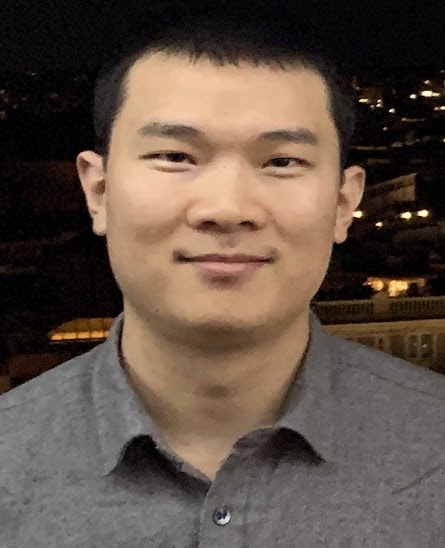}}]{Zhao Yuan}
received joint PhD degree from KTH Royal Institute of Technology, Comillas Pontifical University and Delft University of Technology in 2018. He worked as a scientist at the Swiss Federal Institute of Technology from 2019 to 2021. In 2010, he was awarded the National Second Prize in China Contemporary Undergraduate Mathematical Contest in Modeling by China Society for Industrial and Applied Mathematics. Zhao proposed and proved the thereoms about the existence and uniqueness of the optimal solution of the convex optimal power flow model based on second-order cone programming. He developed the Energy Management System (EMS) of the 560kWh/720kVA Battery Energy Storage System (BESS) on EPFL campus. Zhao co-developed the Smart Grid in Aigle Switzerland in 2020.
His research work on coordinated transimission-distribution operation was awarded by Advances in Engineering (AIE) as Key Scientific Article Contributing to Excellence in Science and Engineering Research in 2017. Zhao's article “Second-order cone AC optimal power flow: convex relaxations and feasible solutions” was awarded as the 2019 Best Paper in the Journal of Modern Power Systems and Clean Energy (MPCE). Webiste: https://sites.google.com/view/yuanzhao/
\end{IEEEbiography}
\begin{IEEEbiography}{Antonio Zecchino}
is a postdoctoral researcher at the Swiss Federal Institute of Technology, Lausanne, Switzerland. He received PhD degree from Technical University of Denmark in 2019. He received the B.Sc. degree in energy engineering and the M.Sc. degree in electrical engineering from the University of Padua, Padua, in 2012 and 2015, respectively.
\end{IEEEbiography}
\begin{IEEEbiography}{Rachid Cherkaoui}
is a senior scientist at the Swiss Federal Institute of Technology (EPFL), Lausanne, Switzerland. Rachid Cherkaoui received the M.Sc. and Ph.D. degrees in electrical engineering in 1983 and 1992, respectively, from EPFL. 
\end{IEEEbiography}
\begin{IEEEbiography}{Mario Paolone} (M’07–SM’10) received the M.Sc. (Hons.) and Ph.D. degrees in electrical engineering from the University of Bologna, Italy, in 1998 and 2002. In 2005, he was an Assistant Professor in power systems with the University of Bologna, where he was with the Power Systems Laboratory until 2011. Since 2011, he has been with the Swiss Federal Institute of Technology, Lausanne, Switzerland, where he is currently a Full Professor and the Chair of the Distributed Electrical Systems Laboratory. His research interests focus on power systems with particular reference to real-time monitoring and operational aspects, power system protections, dynamics and transients. Dr. Paolone has authored or co-authored over 300 papers published in mainstream journals and international conferences in the area of energy and power systems that received numerous awards including the IEEE EMC Technical Achievement Award, two IEEE Transactions on EMC best paper awards, the IEEE Power System Dynamic Performance Committee’s prize paper award and the Basil Papadias best paper award at the 2013 IEEE PowerTech. Dr. Paolone was the founder Editor-in-Chief of the Elsevier journal Sustainable Energy, Grids and Networks.
\end{IEEEbiography}
\vfill
\end{document}